\newtheorem{theorem}{Theorem}[section]
\newtheorem{corollary}[theorem]{Corollary}
\newtheorem{lemma}[theorem]{Lemma}
\theoremstyle{definition}
\newtheorem{remark}[theorem]{Remark}
\newtheorem{example}[theorem]{Example}
\numberwithin{equation}{section}
\numberwithin{table}{section}
\def\SB{{\sf SB}}
\def\ZT{{\sf Ztype}}
\def\B{{\sf B}}
\def\ZF{{\sf ZFib}}
\journal{Theoretical Computer Science}
\begin{document}
\begin{frontmatter}

%% use the tnoteref command within \title for footnotes;
%% use the tnotetext command for the associated footnote;
%% use the fnref command within \author or \address for footnotes;
%% use the fntext command for the associated footnote;
%% use the corref command within \author for corresponding author footnotes;
%% use the cortext command for the associated footnote;
%% use the ead command for the email address,
%% and the form \ead[url] for the home page:
%%
%% \title{Title\tnoteref{label1}}
%% \tnotetext[label1]{}
%% \author{Name\corref{cor1}\fnref{label2}}
%% \ead{email address}
%% \ead[url]{home page}
%% \fntext[label2]{}
%% \cortext[cor1]{}
%% \address{Address\fnref{label3}}
%% \fntext[label3]{}

\title{On Searching Zimin Patterns}

%% use optional labels to link authors explicitly to addresses:
%% \author[label1,label2]{<author name>}
%% \address[label1]{<address>}
%% \address[label2]{<address>}

\author[label1]{Wojciech Rytter}
\ead{rytter@mimuw.edu.pl}
\address[label1]{Warsaw University, Poland}
\author[label2]{Arseny M. Shur}
\ead{arseny.shur@usu.ru}
\address[label2]{Ural Federal University, Ekaterinburg, Russia}

\begin{abstract} 
In the area of pattern avoidability the central role is played by special words called Zimin patterns. The symbols of these patterns are treated as variables and the rank of the pattern is its number of variables. Zimin type of a word $x$ is introduced here as the maximum rank of a Zimin pattern matching $x$. We show how to compute Zimin type of a word on-line in linear time. Consequently we get a quadratic time, linear-space algorithm for searching Zimin patterns in words. Then we how the Zimin type of the length $n$ prefix of the infinite Fibonacci word is related to the representation of $n$ in the Fibonacci numeration system. Using this relation, we prove that Zimin types of such prefixes and Zimin patterns inside them can be found in logarithmic time. Finally, we give some bounds on the function $f(n,k)$ such that every $k$-ary word of length at least $f(n,k)$ has a factor that matches the rank $n$ Zimin pattern.
\end{abstract}

\begin{keyword}
Zimin word \sep unavoidable pattern \sep on-line algorithm \sep Fibonacci word

\MSC[2010] 68R15 \sep 68W32
\end{keyword}

\end{frontmatter}

\section{Introduction}
Pattern avoidability is a well-established area studying the problems involving words of two kinds: ``usual'' words over the alphabet of constants and \emph{patterns} over the alphabet of variables\footnote{In a more general setting, which is not discussed in this paper, patterns may contain constants along with variables.}. A pattern $X$ \emph{embeds} in a word $w$ if $w$ has a factor of the form $h(X)$ where $h$ is a non-erasing morphism. An \emph{unavoidable} pattern is a pattern that embeds in any long enough word over any finite alphabet. In the problem of pattern (un)avoidability the crucial role is played by Zimin words \cite{Zim82}. The Zimin word (or \emph{Zimin pattern}) of rank $k$ is defined as follows:
$$\forall\; k>1\ Z_k\;=\; Z_{k-1}\cdot x_k\cdot Z_{k-1},\ \text{and}\  Z_1\;=\;x_1\;.$$
Hence
\begin{multline*}
Z_2\;=\; x_1x_2x_1,\ \ Z_3\;=\; x_1x_2x_1\;x_3\; x_1x_2x_1,\\ 
Z_4\;=\; x_1x_2x_1\;x_3\; x_1x_2x_1\;x_4\; x_1x_2x_1\;x_3\; x_1x_2x_1.
\end{multline*}
The seminal result in the area is the unavoidability theorem by Bean, Ehrenfeucht, McNulty, and Zimin (\cite{BEM79,Zim82}; see \cite{Sap95} for an optimized proof). The theorem contains two conditions equivalent to unavoidability of a pattern $X$ with $k$ variables. The first condition is the existence of a successful computation in some nondeterministic reduction procedure on $X$, and the second, more elegant, condition says that $X$ embeds in the word $Z_k$. On the other hand, it is still a big open problem whether unavoidability of a pattern can be checked in the time polynomial in its length \cite[Problem~17]{Cur05}. This problem belongs to NP and is tractable for a fixed $k$. The general case is strongly suspected to be NP-complete, though no proof has been given.

Another natural computationally hard problem concerning avoidability is the embedding problem: given a word and a pattern, decide whether the pattern embeds in the word. This problem is NP-complete; Angluin \cite{Ang80} proved this fact for patterns with constants, but his proof can be adjusted for the patterns without constants as well. Note that the unavoidability problem is not a particular case of the embedding problem, because a (potentially long) Zimin word is not a part of the input. On the other hand, the inverse problem of embedding a Zimin pattern in a given word is a particular case of the embedding problem. Here we show that this particular case is quite simple.

In the first part of the paper (Sect. 2) we address the following decision problem:

\smallskip\noindent
{\bf Searching Zimin patterns}

{\em Input:} a word $w$ and integer $k$;

{\em Output:} {\em yes} if $Z_k$ embeds in $w$.

\smallskip
We give an algorithm solving this problem in quadratic time and linear space. The main step of the algorithm is an online linear-time computation of the characteristic we call \emph{Zimin type} of a word. Zimin type of a finite word $w$ is the maximum number $k$ such that $w$ is an image of the Zimin word $Z_k$ under a non-erasing morphism. By definition, the empty word has Zimin type 0.

\begin{example}
Zimin type of $u\;=\; adbadccccadbad$ is 3, because $u$ is the image of $Z_3$ under the morphism
$$x_1\rightarrow ad,\ x_2\rightarrow b,\ x_3\rightarrow cccc.$$
The Zimin decomposition of $u$ is:\ $u\;=\; ad\;b\;ad\;cccc\;ad\;b\;ad.$ 

The answer of {\bf Searching Zimin patterns} for $k=3$ and the word $w\;=\; ccccadbadccccadbadccccc$ is {\em yes} ($w$ has the word $u$ of Zimin type 3 as a factor), but for $k=3$ and $w\;=\; aaabbaabbaa$ the answer is {\em no}.
\end{example}

In the second part of the paper (Sect. 3) we study Zimin types and the embeddings of Zimin patterns for Fibonacci words. First we relate the type of the length $n$ prefix of the infinite Fibonacci word to the representation of $n$ in the Fibonacci numeration system (Theorem~\ref{Fpref}). This result and the fact that for Fibonacci words Zimin types of prefixes dominate Zimin types of other factors (Theorem~\ref{Ffactors}) allow us to solve {\bf Searching Zimin patterns} for this particular case in logarithmic time (Theorem~\ref{Flog}). 

In the last part of the paper (Sect.~4) we consider a couple of combinatorial problems. In Sect.~4.1 we analyze the fastest possible growth of the sequence of Zimin types for the prefixes of an infinite word. Finally, in Sect.~4.2 we give some results on the length such that the given Zimin pattern embeds in any word of this length over a given alphabet.

\section{Algorithmic problems} \label{sec:alg}
\subsection{Recurrence for Zimin types of prefixes of a word}

Recall that a \emph{border} of a word $w$ is any word that is both a proper prefix and a proper suffix of $w$. We call a border \emph{short} if its length is $<\frac{|w|}{2}$. The notation $Bord(w)$ and $ShortBord(w)$ stand for the longest border of $w$ and the longest short border of $w$, respectively. Clearly, any of these borders can coincide with the empty word. 

\begin{example}
For $w\;=\; aabaabcaab\, aabaabcaab\, aab$ we have:
$$Bord(w)\;=\; aabaabcaab\, aab,\ \ ShortBord(w)\;=\; aabaab.$$
Observe that in this particular example $Shortbord(w)$ is the second longest border of $w$, but for any $k\ge1$ there are examples where $ShortBord(w)$ is the $k$th longest border of $w$.
\end{example}

For a given word $x$ denote by $\ZT[i]$ the Zimin type of $x[1..i]$. 

\begin{lemma} \label{recurr}
Zimin type of a non-empty word can be computed iteratively through the equation
\begin{equation}\label{eq1}
\ZT[i] =  1 + \ZT[j],\ \text{where}\ j=|ShortBord(x[1..i])|
\end{equation}
\end{lemma}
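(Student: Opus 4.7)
I will prove the equality by two inequalities. The key structural identity, immediate from $Z_{k+1} = Z_k \cdot x_{k+1} \cdot Z_k$ and the definition of Zimin type, is that for every non-empty $w$,
$$\ZT(w) \;=\; 1 + \max\bigl\{\ZT(u) : u \text{ is a short border of } w\bigr\},$$
with the convention $\ZT(\varepsilon)=0$. Indeed, a Zimin realization $w = h(Z_k)$ of rank $k \ge 2$ displays $h(Z_{k-1})$ as a non-empty short border of $w$ of rank at least $k-1$, and conversely any non-empty short border $u$ with $w = uvu$, $v$ non-empty, lets us lift a rank-$k$ realization of $u$ to a rank-$(k{+}1)$ realization of $w$ via $h(x_{k+1}) = v$. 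Thus the lemma amounts to showing that the maximum above is attained at the \emph{longest} short border.

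\smallskip
\noindent\textbf{Easy direction $\ZT[i] \geq 1 + \ZT[j]$.} Apply the structural identity with $u = ShortBord(x[1..i])$: the middle factor $v$ in $x[1..i] = uvu$ is non-empty by the definition of a short border, so any rank-$\ZT[j]$ realization of $u$ extends to a rank-$(\ZT[j]{+}1)$ realization of $x[1..i]$.

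\smallskip
\noindent\textbf{Hard direction $\ZT[i] \leq 1 + \ZT[j]$.} Via the structural identity, this reduces to the auxiliary claim
$$(\star) \qquad \ZT(u_0) \;\leq\; \ZT\bigl(ShortBord(w)\bigr)\ \ \text{for every short border } u_0 \text{ of every non-empty } w,$$
which I prove by strong induction on $|w|$. Set $s = ShortBord(w)$; since the borders of $w$ form a chain, $u_0$ is a border of $s$. If $|u_0| < |s|/2$, then $u_0$ is a short border of $s$ and the structural identity applied to $s$ already gives $\ZT(u_0) \leq \ZT(s) - 1$. In the overlap range $|s|/2 \leq |u_0| < |s|$, the induction hypothesis (the lemma, applied to the shorter word $u_0$) writes $\ZT(u_0) = 1 + \ZT(u_0')$ with $u_0' = ShortBord(u_0)$; now $|u_0'| < |u_0|/2 \leq |s|/2$, so $u_0'$ is a short border of $s$ as well, and the previous subcase yields $\ZT(u_0') \leq \ZT(s) - 1$. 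Substituting gives $\ZT(u_0) \leq \ZT(s)$, closing $(\star)$.

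\smallskip
\noindent\textbf{Main obstacle.} The hard part will be the overlap range $|s|/2 \leq |u_0| < |s|$, where $u_0$ is a border of $s$ but is too long to be a short border of $s$, so the structural identity for $s$ cannot bound $\ZT(u_0)$ against $\ZT(s)$ in a single step. Descending one level to $ShortBord(u_0)$, which is forced to be short enough to serve as a short border of $s$, bridges the gap and allows the induction to close.
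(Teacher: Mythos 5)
Your proof is correct and takes essentially the same approach as the paper: both reduce the lemma to showing that the longest short border maximizes Zimin type among all short borders, and both bridge the problematic overlap case by descending one level in the border hierarchy of the shorter border --- your $ShortBord(u_0)$ plays exactly the role of the paper's witness $z$ with $\ZT(z)=\ZT(u_0)-1$, a word short enough to sit as a bordering prefix and suffix with non-empty middle inside the longer border. The only difference is packaging: you run a strong induction on length, whereas the paper avoids induction by invoking the witness decomposition $x[1..i]=zuzvzuz$ directly.
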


\begin{proof}
Since $u=ShortBord(w)$ implies $w=uvu$ for some non-empty word $v$, the left-hand part of (\ref{eq1}) majorizes the right-hand part. At the same time, $\ZT[i]=\ZT[j]+1$, where $j$ is the length of \emph{some} short border of $x[1..i]$. Hence, it suffices to show that increasing the length of the border within the interval $(0;i/2)$ cannot decrease its Zimin type. 

Thus we can assume $\ZT[i]\ge 3$. Then $x[1..i]=zuzvzuz$, where $u,v$ are non-empty and 
$$\ZT[|z|]=\ZT[i]-2,\quad \ZT[|zuz|]=\ZT[i]-1.$$ Suppose that $x[1..i]$ has another bound which is longer than $zuz$ but of length $<i/2$. The situation is depicted in Fig.~\ref{decomp}.

\begin{figure}[htb] \label{decomp}
\centerline{\includegraphics[trim = 35mm 149mm 59mm 68mm, clip, width=7cm]{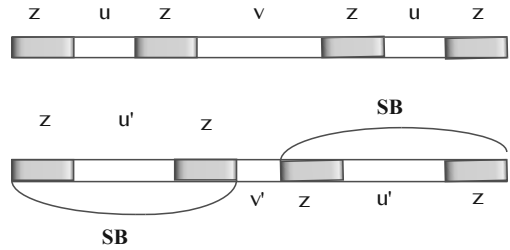}}
\caption{Zimin decompositions using two different borders.} 
\end{figure}

Since $zuz$ is both a prefix and a suffix of this new border, the new border begins and ends with $z$. Hence it has the form $zu'z$ for some non-empty $u'$ and its Zimin type is at least $\ZT[i]-1$. The lemma is proved.
\end{proof}

We also mention the following property of the function $ShortBord$.

\begin{lemma} \label{sqr}
Let $w$ be a non-empty word such that $ShortBord(w)=Bord(w)$. Then $ShortBord(ww)=ShortBord(w)$.
\end{lemma}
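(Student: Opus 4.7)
The plan is to characterize exactly which words appear as short borders of $ww$, relying on a length bound that follows from the hypothesis. Set $u = Bord(w) = ShortBord(w)$, so by the very definition of \emph{short} we have $|u| < |w|/2$. The target value $ShortBord(ww)$ is the longest border of $ww$ of length strictly less than $|ww|/2 = |w|$, so I only need to understand borders of $ww$ whose length is below $|w|$.

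First I would show that every border $b$ of $ww$ with $|b| < |w|$ is already a border of $w$. Since $b$ is a prefix of $ww$ of length $< |w|$, it is a prefix of the first copy of $w$; since it is also a suffix of $ww$ of length $< |w|$, it is a suffix of the second copy of $w$. Hence $b$ is both a proper prefix and a proper suffix of $w$, that is, a border of $w$.

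For the converse, I would use that $w$ is simultaneously a prefix and a suffix of $ww$, so any border of $w$ is automatically a border of $ww$. The hypothesis now plays its role: every border of $w$ has length at most $|Bord(w)| = |u| < |w|/2 < |w|$, so such a border actually qualifies as a short border of $ww$. Combining the two directions, the set of short borders of $ww$ coincides with the set of all borders of $w$, and the maximum-length element is $u = ShortBord(w)$, as claimed.

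The only delicate point is making sure no ``long'' border of $w$ sneaks in and produces a border of $ww$ whose length lies in the interval $[|w|/2,\,|w|)$ and which is not a border of $w$; the argument above rules this out, but it is precisely here that the hypothesis $ShortBord(w) = Bord(w)$ is essential. Without it, a border of $w$ of length at least $|w|/2$ might survive as a strictly longer short border of $ww$ (for instance $w = aaa$, where $ShortBord(w) = a$ but $ShortBord(ww) = aa$). I expect no further difficulties beyond keeping the inequalities $< |w|/2$ versus $< |w| = |ww|/2$ straight.
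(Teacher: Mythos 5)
Your proof is correct and follows essentially the same route as the paper's: both rest on the observation that every border of $ww$ of length less than $|ww|/2=|w|$ is a border of $w$, combined with the hypothesis $ShortBord(w)=Bord(w)$ to bound the lengths. You merely spell out the converse inclusion (that each border of $w$ survives as a short border of $ww$) which the paper leaves implicit, and your counterexample $w=aaa$ is a nice illustration of why the hypothesis is needed.
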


\begin{proof}
Any border of $ww$ of length strictly less than $|ww|/2=|w|$ is a border of $w$ and thus is no longer than $Bord(w)$. The result now follows from the definition of $ShortBord$.
\end{proof}

\subsection{Algorithm}

We show how to compute Zimin type of a given word $x$ on-line in linear time.
If necessary, for any  $i\le|x|$ and any $k\le\ZT[i]$ a morphism $h$ such that $h(Z_k)=x[1..i]$ can be explicitly reconstructed in an obvious way from the table $\ZT$ using Lemma~\ref{recurr}. This reconstruction also takes linear time, but is not on-line. 

\begin{theorem} 
Zimin type can be computed on-line in linear time using no arithmetic operations other than the shift by one bit and the increment.
\end{theorem}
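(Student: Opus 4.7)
The plan is to reduce the task to computing, on-line and in linear time, the integer $b[i]=|ShortBord(x[1..i])|$ for each prefix; once this is available, Lemma~\ref{recurr} gives $\ZT[i]=1+\ZT[b[i]]$ by a single increment and a lookup, so the $\ZT$ table is produced in the same budget. In parallel, I would run the classical Morris--Pratt procedure to compute the failure function $f[i]=|Bord(x[1..i])|$, which is well known to require only increments and character comparisons and to finish in amortized linear time.

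The heart of the argument is an MP-like loop for $b[i]$. I maintain a variable $s$ with the invariant $s=b[i-1]$ upon entering step $i$. The candidate extension $s+1$ may fail to realise a short border of $x[1..i]$ in two ways: the character test $x[s+1]=x[i]$ may fail, or the halving test $2(s+1)<i$ may fail. While $s>0$ and at least one of these defects holds, I replace $s$ by $f[s]$; once both are cleared, I increment $s$ and set $b[i]=s$ (or $b[i]=0$ if $s$ collapses to $0$ with no match). The halving test is the only place where a non-unary operation is needed and is handled by a single right shift, comparing $s+1$ with $i\gg 1$; everything else is increment or table lookup.

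Correctness rests on two observations. First, iterating $f$ starting from $s$ enumerates precisely the lengths of borders of $x[1..i-1]$ that are strictly shorter than $s$, since for any prefix $x[1..s]$ of $x[1..i-1]$ the proper borders of $x[1..s]$ coincide with the borders of $x[1..i-1]$ of length at most $s-1$. Second, $b[i]-1\le b[i-1]$: writing $\ell^\star=b[i]$, the word $x[1..\ell^\star-1]$ is a border of $x[1..i-1]$ of length satisfying $2(\ell^\star-1)<i-2<i-1$, so it is a short border of $x[1..i-1]$. Combined, these guarantee that the chain-walk starting at $s=b[i-1]$ cannot overshoot $\ell^\star$, and the first $s$ passing both tests yields $s+1=\ell^\star$ exactly.

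For the runtime, take $s$ itself as the potential. Across the $n$ outer iterations, $s$ is incremented at most $n$ times, and each execution of the inner while loop strictly decreases $s$ by a table lookup, so the total number of inner iterations is bounded by the total number of increments, yielding $O(n)$ time overall. The hard part will be the verification that replacing the usual single ``mismatch'' test of MP by the disjunctive exit condition ``mismatch or halving violation'' does not break the loop invariant; this is precisely what the inequality $b[i]-1\le b[i-1]$ is designed to patch, and once it is in place the rest is a direct transposition of the classical MP analysis.
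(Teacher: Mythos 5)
Your proposal is correct and follows essentially the same route as the paper: compute the Morris--Pratt border array, run a second MP-style fallback loop over it with the extra ``halving'' test to obtain $|ShortBord(x[1..i])|$ on-line, and then apply Lemma~\ref{recurr}; the amortized potential argument is the same. Your explicit justification of the loop invariant via the inequality $b[i]-1\le b[i-1]$ and the chain-enumeration property is in fact slightly more detailed than the paper's correctness argument.
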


\begin{proof}
For a given word $x$, let $\B[i]=|Bord(x[1..i])|$, $\SB[i]=|ShortBord(x[1..i])|$. 
It is known since Morris and Pratt \cite{MoPr70} that the array $\B$ can be computed on-line in linear time. 

The following modification of the Morris--Pratt function computes the array $\ZT$ on-line in linear time for the word $x\;=\; x[1..m]$.

\begin{center}
\begin{minipage}{11cm}
\noindent{\bf Algorithm Compute-ZiminTypes.}\\[2mm]
\texttt{
$t:=s:=\B[1]:=0$; $\B[0]:=-1$;\\
$\ZT[0]:=0$; $\ZT[1]:=1$;\\[1mm]
for  $i=2$ to $m$ do begin\\[2mm]
\hspace*{0.3cm} {\bf Compute $\B[i]$:}\\[1mm]
\hspace*{0.9cm} while  $t \ge 0$ and $x[t + 1] \ne x[i]$ do\\
\hspace*{1.7cm} $t:= \B[t]$;\\
\hspace*{0.9cm} $t:=t+1;\; \B[i]:=t;$\\[2mm]
\hspace*{0.3cm} {\bf Compute $s=\SB[i]$:}\\[1mm]
\hspace*{0.9cm} while $s \ge 0$ and $(\; 2s+1 \ge i\ \mbox{or}\ x[s + 1] \ne x[i]\;)$ do\\
\hspace*{1.7cm} $s:= \B[s]$;\\
\hspace*{0.9cm} $s:=s+1;$\\[2mm]
\hspace*{0.3cm} {\bf Compute $\ZT[i]$:}\\[1mm]
\hspace*{0.9cm} $\ZT[i]\;:=\; \ZT[s]+1;$
}
\end{minipage}
\end{center}

\noindent\paragraph{Complexity analysis} 
The first part (computing $\B[i]$) is a classical computation of the border array. The complexity of the next part (computing $s=\SB[i]$) takes in total
linear time, since the number of executed assignments ``$s:=\B[s]$'' (decrements of $s$) is bounded by the number of assignments ``$s:=s+1$'', which is linear. Therefore the algorithm works on-line in linear time.

\noindent\paragraph{Correctness} 
The only thing to be proved is that $s$ indeed equals $\SB[i]$. Let us prove this by induction; the base case is trivial. For the inductive step note that if $x[1..i]$ has a (short) border of length $k>0$ then the word $x[1..i{-}1]$ has a (short) border of length $k{-}1$. By the inductive hypothesis, we have $s=\SB[i{-}1]$ at the beginning of the $i$th iteration. During this iteration, all short borders of $x[1..i{-}1]$ are examined in the order of decreasing length until a border extending to a short border of $x[1..i]$ is found. The border found is $ShortBord(x[1..i])$ by definition, whence the result. 
\end{proof}
\begin{remark}
In general, the number $\SB[i]$ cannot be computed from the previous values of $\SB$ instead of $\B$. Indeed, let $x[1..i]=ww=ababa\;ababa$. Then $\SB[i{-}1]=i/2-1$. This border extends to $x[1..i]$ but it will be short no more. Hence, $ShortBord(x[1..i])=Bord(w)$ (compare to Lemma~\ref{sqr}). So, $\B[i/2]=3$ (not $\SB[i/2]=1$) should be precomputed.
\end{remark}

\begin{theorem} 
An embedding of a Zimin pattern of a given type in a word can be found in quadratic time and linear space.
\end{theorem}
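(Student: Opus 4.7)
The plan is to reduce the search problem to $n$ instances of the Zimin type computation and then reconstruct the embedding when one is detected. Let $w=w[1..n]$ be the input word and let $k$ be the target rank.

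First I would iterate over all possible starting positions $i=1,2,\dots,n-|Z_k|+1$ of the sought factor. For each fixed $i$, apply the on-line algorithm \textbf{Compute-ZiminTypes} from the previous theorem to the suffix $w[i..n]$, stopping as soon as some index $j$ gives $\ZT[j-i+1]\ge k$ or the suffix is exhausted. Correctness is immediate: $w$ contains a factor matching $Z_k$ iff there exist $i\le j$ with Zimin type of $w[i..j]$ at least $k$, in which case the algorithm detects it at the outer iteration starting at $i$. For the cost, each inner run processes at most $n$ characters in linear time, the arrays $\B$ and $\ZT$ used in the run have linear size and are discarded before starting the next outer iteration, so the whole procedure runs in $O(n^2)$ time and $O(n)$ space.

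When a position $j$ with $\ZT[j-i+1]\ge k$ is reached, the embedding itself must be reported. I would reconstruct it using Lemma~\ref{recurr}: maintain the array $\SB$ (or equivalently the short-border pointer at each index) produced by the run and chase the pointers. Starting from the suffix position corresponding to $j$, at each step the current factor has the form $u\,v\,u$, where $u$ is the short border recorded at the current index and $v$ is determined by the gap. Repeating this $k-1$ times yields the Zimin decomposition $w[i..j]=h(Z_k)$ and hence the morphism $h$ directly; this reconstruction pass is linear, so it does not affect the overall bounds.

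The only subtlety I anticipate is bookkeeping to keep space linear while still permitting reconstruction, since we discard arrays across outer iterations. This is handled by the observation that we only ever need to reconstruct in the one iteration where a match is found: we stop the outer loop immediately when some $\ZT$ value reaches $k$, and at that moment both the $\B$ array and the $\SB$ (or $\ZT$) array for the active suffix are still in memory and suffice to recover the embedding via Lemma~\ref{recurr}. No other step poses any real obstacle, so the construction delivers the claimed quadratic-time linear-space algorithm.
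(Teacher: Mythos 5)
Your proposal is correct and matches the paper's proof, which simply applies the on-line Zimin-type algorithm to each suffix of the word; the reconstruction step you describe via Lemma~\ref{recurr} is exactly what the paper indicates earlier in Section~\ref{sec:alg}. No substantive difference.
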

\begin{proof}
One can apply the previous algorithm to each suffix of the word.
\end{proof}

\section{Zimin types of Fibonacci factors}

In stringology, Fibonacci words are frequently used to demonstrate certain algorithms and constructions. In this section we show that Fibonacci words possess quite interesting properties related to Zimin patterns. First, Zimin types of prefixes of the infinite Fibonacci word and 
the related function $\SB$ are very closely related to the Fibonacci numeration system. Second, the problem {\bf Searching Zimin patterns} for these prefixes has extremely low time complexity.

We recall that Fibonacci words can be defined by the recurrence relation $F_n=F_{n-1}F_{n-2}$ with the base values $F_0=a$, $F_{-1}=b$. These words correspond to Fibonacci numbers: $\Phi_n=|F_n|$.  We write $F_{\infty}$ for the infinite Fibonacci word, which is a unique word having all $F_n$ as prefixes, and let $\ZF[n]$ denote the Zimin type of $F_{\infty}[1..n]$. The notation $\B[n]$ and $\SB[n]$ in this section refers to the border array and the short-border array of $F_{\infty}$, respectively.

Recall that any positive integer $n$ can be uniquely written as 
$$
n=\Phi_{n_1}+\Phi_{n_2}+\cdots+\Phi_{n_k},
$$ 
where $k\ge 1$, $n_k\ge 0$, and $n_i>n_{i+1}+1$ for each $i=1,\ldots,k{-}1$. This sum can be converted into a binary positional notation, called the \emph{Fibonacci representation} of $n$, for example:
$$
28=21+5+2=\Phi_6+\Phi_3+\Phi_1=(1001010)_{Fib}.
$$
This way of writing numbers is usually referred to as \emph{Fibonacci numeration system}. The array $\SB$ admits an easy description in terms of Fibonacci representation, as the following lemma shows. Let $\lambda$ denote the empty word, $\Sigma=\{0,1\}$. 

\begin{lemma} \label{FSB}
Let $n\ge3$ be an integer, $n=(w)_{Fib}$. Then either $w\,=\,101\alpha$ for some $\alpha\in\Sigma^*$, or $w\,=\,1001\alpha$  for some $\alpha\in\Sigma^*$, or $w\,=\,100\alpha$ for some $\alpha \in \{ \lambda \} \cup 0\Sigma^*$.
In each case, $\SB[n]=(1\alpha)_{Fib}$.
\end{lemma}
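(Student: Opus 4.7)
The plan is to verify the formula by a three-case analysis: in each case I construct an explicit border of $F_\infty[1..n]$ of length $(1\alpha)_{Fib}$, check that it is shorter than $n/2$, and argue that no longer short border exists. A preliminary ingredient is the standard Zeckendorf factorisation of Fibonacci prefixes: if $n=\Phi_{n_1}+\Phi_{n_2}+\cdots+\Phi_{n_k}$ is the Zeckendorf representation of $n$, then $F_\infty[1..n]=F_{n_1}F_{n_2}\cdots F_{n_k}$, proved by a routine induction from $F_{k+1}=F_kF_{k-1}$. The three cases of the lemma correspond to the possible shapes of the leading block of this representation: case~1 forces $n_2=n_1-2$, case~2 forces $n_2=n_1-3$, and case~3 forces $n_2\le n_1-4$ (or $k=1$).

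For the border construction, in cases~1 and~2 one reads off from the representation that $(1\alpha)_{Fib}=n-\Phi_{n_1}$, so the factorisation immediately gives $F_\infty[1..n]=F_{n_1}\cdot F_\infty[1..(1\alpha)_{Fib}]$, exhibiting the claimed prefix as a suffix. In case~3, $(1\alpha)_{Fib}=\Phi_{n_1-2}+r$ with $r=n-\Phi_{n_1}$; using the identity $F_{n_1}=F_{n_1-2}F_{n_1-3}F_{n_1-2}$, I rewrite $F_\infty[1..n]$ and read off its last $\Phi_{n_1-2}+r$ letters as $F_{n_1-2}\cdot F_\infty[1..r]$, which equals $F_\infty[1..(1\alpha)_{Fib}]$ by re-applying the factorisation to $\Phi_{n_1-2}+r$. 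The shortness $(1\alpha)_{Fib}<n/2$ in each case reduces to the identity $\Phi_{n_1-1}-\Phi_{n_1-2}=\Phi_{n_1-3}$ together with the Zeckendorf bound on $r$ (in case~3 this is exactly $r<\Phi_{n_1-3}$).

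The main obstacle is the maximality claim: no border length lies in the open interval $((1\alpha)_{Fib},\,n/2)$. Equivalently, $F_\infty[1..n]$ has no period $p$ in $(n/2,\,\Phi_{n_1})$ in cases~1 and~2, or in $(n/2,\,\Phi_{n_1-1})$ in case~3. I would invoke the known description of periods of Fibonacci prefixes: the smallest period of $F_\infty[1..n]$ that exceeds $n/2$ is itself a Fibonacci number $\Phi_j$ of suitable size. The remaining work is then a direct count of admissible Fibonacci numbers in the relevant interval. In cases~1 and~2 the inequality $r\ge\Phi_{n_1-3}$ forces $\Phi_{n_1-1}\le n/2$, leaving $\Phi_{n_1}$ as the smallest candidate; in case~3 the bound $r<\Phi_{n_1-3}$ keeps every Fibonacci number out of $(n/2,\,\Phi_{n_1-1})$. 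Case~3 is the most delicate, because $\Phi_{n_1-1}$ exceeds $n/2$ by only the thin margin $(\Phi_{n_1-3}-r)/2$, making it essential to use the suffix $F_{n_1-2}$ of $F_{n_1}$ rather than the longer suffix $F_{n_1-1}$, which would fail to produce a short border.
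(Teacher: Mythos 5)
Your argument is correct and arrives at the same formula, but by a genuinely different route from the paper's. The paper never constructs the border explicitly: it reads $\B[n]=n-\Phi_{k-1}$ off the folklore minimal-period sequence $1\,2^2\,3^3\,5^5\ldots$, notes that $\SB[n]=\B[n]$ whenever $n<2\Phi_{k-1}$, and handles the remaining range $n\in[2\Phi_{k-1},\Phi_{k+1}-2]$ by computing $\SB[2\Phi_{k-1}]=\Phi_{k-3}$ from the square $F_{k-1}F_{k-1}$ via Lemma~\ref{sqr} and then sliding along the constant-period range; the three rows of the resulting table are matched to the three digit patterns only at the end. You instead work directly from the numeration system, using the factorisation $F_\infty[1..n]=F_{n_1}\cdots F_{n_k}$ to exhibit the border $F_\infty[1..(1\alpha)_{Fib}]$ in each case (with the identity $F_{n_1}=F_{n_1-2}F_{n_1-3}F_{n_1-2}$ playing the role of the paper's Lemma~\ref{sqr} in case~3); this makes the digit-pattern correspondence transparent from the start, at the cost of a separate maximality argument. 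The one step you should not leave as a citation is precisely that maximality step: the fact you invoke --- that the smallest period of $F_\infty[1..n]$ exceeding $n/2$ is a Fibonacci number --- is strictly stronger than the standard folklore statement about \emph{minimal} periods of Fibonacci prefixes, so derive it. It does follow in two lines: the periods of $F_\infty[1..n]$ correspond to the border chain, and if the minimal period $\Phi_j$ is at most $n/2$, the longest border is the Fibonacci prefix of length $n-\Phi_j\in[\Phi_j,\Phi_{j+1}-2]$, whose minimal period is $\Phi_{j-1}$ by the same folklore result, so the next period is $\Phi_j+\Phi_{j-1}=\Phi_{j+1}$, which exceeds $n/2$ because $n\le\Phi_{j+2}-2<2\Phi_{j+1}$. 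With that supplied, your interval counts in the three cases are all correct and the proof goes through.
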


\begin{proof}
Let us compute the formula for $\SB[n]$. A folklore result says that the sequence of minimal periods of prefixes of $F_{\infty}$ looks like
$$1\;2^2\;3^3\;5^5\;8^8\; 13^{13}\; 21^{21}\,\ldots,$$
i.e., each $\Phi_k$ appears in it exactly $\Phi_k$ times. Thus, $\Phi_{k-1}$ is the minimal period of $F_\infty[1..n]$ for $n=\Phi_k{-}1,\Phi_k,\ldots, \Phi_{k+1}{-}2$. The knowledge of the period immediately gives us $\B[n]=n-\Phi_{k-1}$. If $n<2\Phi_{k-1}$, one has $\SB[n]=\B[n]$. It remains to find $\SB[n]$ for $n=2\Phi_{k-1},\ldots, \Phi_{k+1}{-}2$. Since 
$$
F_\infty[1..2\Phi_{k-1}]=F_{k-1}F_{k-1}=F_{k-3}F_{k-4}F_{k-3}F_{k-3}F_{k-4}F_{k-3},
$$
by Lemma~\ref{sqr} we have $\SB[2\Phi_{k-1}]=|F_{k-3}|=\Phi_{k-3}$. This length of border corresponds to the period $\Phi_k$. All prefixes with the length in the considered range have the period $\Phi_k$ as well. Hence, $\SB[n{+}i]=\SB[n]+i$ if both $n$ and $n{+}i$ belong to the range. As a result, for any $k\ge 2$ we can restore the whole picture of periods and borders for the prefixes of length between $\Phi_k$ and $\Phi_{k+1}{-}1$, see Table~\ref{FSBtab}:

\begin{table}[htb]
\caption{Periods and borders of the prefixes of the infinite Fibonacci word.} \label{FSBtab}
$$
\begin{array}{cl|l|l|l}
\hline
\multicolumn{2}{c|}{\text{Length}}&\text{Min period}&\B&\SB\\
\hline
\Phi_k&+\,0&\Phi_{k-1}&\Phi_{k-2}&\Phi_{i-2}\\
&+\,1&\Phi_{k-1}&\Phi_{k-2}+1&\Phi_{k-2}+1\\
&\cdots&\cdots&\cdots&\cdots\\
&+\,\Phi_{k-3}-1&\Phi_{k-1}&\Phi_{k-2}+\Phi_{k-3}-1&\Phi_{k-2}+\Phi_{k-3}-1\\
\hline
&+\,\Phi_{k-3}&\Phi_{k-1}&\Phi_{k-1}&\Phi_{k-3}\\
&\cdots&\cdots&\cdots&\cdots\\
&+\,\Phi_{k-2}-1&\Phi_{k-1}&2\Phi_{k-2}-1&\Phi_{k-2}-1\\
\hline
&+\,\Phi_{k-2}&\Phi_{k-1}&2\Phi_{k-2}&\Phi_{k-2}\\
&\cdots&\cdots&\cdots&\cdots\\
&+\,\Phi_{k-1}-2&\Phi_{k-1}&\Phi_{k-1}+\Phi_{k-2}-2&\Phi_{k-1}-2\\
&+\,\Phi_{k-1}-1&\Phi_k&\Phi_{k-1}-1&\Phi_{k-1}-1\\
\hline
\end{array}
$$
\end{table}

Splitting the rows of Table~\ref{FSBtab} into three ranges, we write the following recurrent formula for $\ZF[n]$ where $n\in\{\Phi_k,\ldots,\Phi_{k+1}-1\}$:
\begin{equation} \label{SB}
\SB[n]=
\begin{cases}
\Phi_{k-2}+j& \text{if } n=\Phi_k+j,\ j<\Phi_{k-3} \text{ (top range)},\\
\Phi_{k-3}+j& \text{if } n=\Phi_k+\Phi_{k-3}+j,\ j<\Phi_{k-4} \text{ (middle range)},\\
\Phi_{k-2}+j& \text{if } n=\Phi_k+\Phi_{k-2}+j,\ j<\Phi_{k-3} \text{ (bottom range)}.
\end{cases}
\end{equation}
Now compare the Fibonacci representations of $n$ and $\SB[n]$ in all three cases. For the top range, the Fibonacci representation of $n$ starts with 1 corresponding to $\Phi_k$, while the next three digits (if all exist) are zeroes. Hence, $n=(100\alpha)_{Fib}$ for some word $\alpha\in\{\lambda\} \cup 0\Sigma^*$, and we see that $\SB[n]=n-\Phi_k+\Phi_{k-2}=(1\alpha)_{Fib}$. In a similar way, for the middle range one has $n=(1001\alpha)_{Fib}$ and $\SB[n]=(1\alpha)_{Fib}$; for the bottom range, $n=(101\alpha)_{Fib}$ and $\SB[n]=(1\alpha)_{Fib}$. The lemma is proved.
\end{proof}

Lemma~\ref{FSB} implicitly mentions the following parameter of Fibonacci representation. For $n=(w)_{Fib}$ define $\psi(n)$ to be a positive integer $k$ such that $w= 1x_1\cdots x_{k-1}z$, where $x_1,\ldots,x_{k-1}\in\{00,001,01\}$, $z\in\{\lambda,0\}$. For example, $\psi(28)=3$ since $1001010= 1\cdot 001\cdot 01\cdot 0 \in 1\,\{00,001,01\}^2\, 0$. Clearly, the function $\psi(n)$ is well defined, because $w\in 1(00^*1)^*0^*$. 

\begin{theorem} \label{Fpref}
$\ZF[n]\;=\;\psi(n)$.
\end{theorem}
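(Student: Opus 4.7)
The plan is to prove the theorem by induction on $n$, chaining the recurrence from Lemma~\ref{recurr} with the structural description of $\SB$ given in Lemma~\ref{FSB}. The recurrence reads $\ZF[n] = 1 + \ZF[\SB[n]]$, and Lemma~\ref{FSB} says that $\SB$ acts on the Fibonacci representation of $n$ by stripping off the initial block $x_1 \in \{00, 001, 01\}$ that comes right after the leading $1$. The crux of the proof is the observation that this stripping reduces $\psi$ by exactly one.

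For the base cases $n \in \{1,2\}$, a direct check suffices: the representations $(1)_{Fib}$ and $(10)_{Fib}$ admit only the trivial parsing $1 \cdot z$ with zero blocks, so $\psi = 1$; and $F_\infty[1..1] = a$, $F_\infty[1..2] = ab$ visibly have Zimin type $1$.

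For the inductive step with $n \ge 3$, one writes the Fibonacci representation of $n$ as $w = 1 x_1 \alpha$, where Lemma~\ref{FSB} guarantees that $x_1$ is uniquely one of $00$, $001$, $01$ and that $\SB[n] = (1\alpha)_{Fib}$. Since Fibonacci representations contain no two consecutive $1$'s, $1\alpha$ is itself a valid Fibonacci representation, and the canonical parsing $1 \cdot x_2 \cdots x_{k-1} \cdot z$ of $\SB[n]$ is obtained from the parsing $1 \cdot x_1 \cdot x_2 \cdots x_{k-1} \cdot z$ of $n$ simply by deleting $x_1$. Hence $\psi(\SB[n]) = \psi(n) - 1$. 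Since $\SB[n] < n$, the inductive hypothesis applies and gives $\ZF[\SB[n]] = \psi(\SB[n])$; plugging this into Lemma~\ref{recurr} yields $\ZF[n] = 1 + (\psi(n) - 1) = \psi(n)$.

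The only mildly delicate point is verifying that the three prefixes listed in Lemma~\ref{FSB} exhaust the possibilities and correspond exactly to the three choices of $x_1$ in the definition of $\psi$, and that $1\alpha$ never introduces two consecutive $1$'s. Both are routine consequences of the no-$11$ property of Fibonacci representations, so beyond this observation the proof is pure bookkeeping driven by the two lemmas.
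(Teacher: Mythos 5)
Your proposal is correct and follows essentially the same route as the paper: the paper also combines Lemma~\ref{recurr} with Lemma~\ref{FSB}, observing that $\SB$ strips the first block $x_1\in\{00,001,01\}$ from the Fibonacci representation and hence decreases $\psi$ by one, then iterates down to $n\in\{1,2\}$ (the paper phrases this as a $(k{-}1)$-fold iteration rather than an induction on $n$, but the content is identical).
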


\begin{proof}
Let $n=(w)_{Fib}$, $k=\psi(n)$, and compute the corresponding representation $w= 1x_1\cdots x_{k-1}z$. Then by Lemma~\ref{FSB} we have $\SB[n]=(1x_2\cdots x_{k-1}z)$, $\SB[\SB[n]]=(1x_3\cdots x_{k-1}z)$, and so on. After $k-1$ such steps we arrive at the number $(1z)_{Fib}$ which equals either to 1 or to 2. Since $\ZF[1]=\ZF[2]=1$, we obtain $\ZF[n]=k$ by Lemma~\ref{recurr}.
\end{proof}

\begin{corollary}
Let $a_n=\ZF[n]/\log_{\phi} n$, where $\phi$ is the {\em golden ratio}. Then the sequence $\{a_n\}_1^\infty$ has no limit,
 $\limsup\limits_{n\to\infty}a_n=1/2$, $\liminf\limits_{n\to\infty}a_n=1/3$, and any number between $1/3$ and $1/2$ is a limit point of $a_n$.
\end{corollary}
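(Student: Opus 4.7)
The plan is to reduce the problem to analysis of $\psi(n)/L(n)$, where $L(n)$ is the length of the Fibonacci representation of $n$. By Theorem~\ref{Fpref}, $\ZF[n]=\psi(n)$, and since $\Phi_k\sim\phi^{k+2}/\sqrt{5}$, we have $L(n) = \log_\phi n + O(1)$. Hence $a_n$ and $\psi(n)/L(n)$ share the same $\limsup$, $\liminf$, and set of limit points, so it suffices to analyze the latter.

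For the two-sided bound I would use a length count. Write $w = 1\,x_1\cdots x_{p-1}\,z$ with $p = \psi(n)$, each $|x_i|\in\{2,3\}$, and $|z|\in\{0,1\}$. If $b$ of the $p-1$ blocks have length $3$, then $L = 2p-1+b+|z|$, and the constraint $0\le b\le p-1$ forces
\[
\frac{L+2-|z|}{3}\ \le\ p\ \le\ \frac{L+1-|z|}{2}.
\]
Dividing by $L$ and letting $n\to\infty$ yields $\liminf a_n \ge 1/3$ and $\limsup a_n \le 1/2$.

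For the matching extremes and all intermediate limit points, I would use a one-parameter interpolation. Fix $t\in[0,1]$, set $a_m = \lfloor tm\rfloor$ and $b_m = m - a_m$, and let $n_m$ be the integer with Fibonacci representation $w_m = 1\cdot(01)^{a_m}\cdot(001)^{b_m}$. This string has no two consecutive $1$'s (so it is a valid Zeckendorf expansion), and its decomposition into blocks from $\{00,001,01\}$ consists of exactly $a_m$ copies of $01$ followed by $b_m$ copies of $001$ --- the standalone block $00$ is never selected, since every $00$-substring inside $w_m$ is immediately followed by a $1$. Hence $\psi(n_m) = m+1$ and $L(n_m) = 1 + 2a_m + 3b_m$, so
\[
\lim_{m\to\infty} \frac{\psi(n_m)}{L(n_m)} = \frac{1}{2t+3(1-t)} = \frac{1}{3-t},
\]
which sweeps continuously over $[1/3,1/2]$ as $t$ ranges over $[0,1]$. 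This yields every value in $[1/3,1/2]$ as a limit point, and in particular confirms $\limsup a_n = 1/2$, $\liminf a_n = 1/3$, and the non-existence of a limit. The whole argument is essentially routine bookkeeping around the block structure; the only point requiring a moment of care is the parse uniqueness of $w_m$ needed to certify the value of $\psi(n_m)$, which is handled by the $00$-observation above.
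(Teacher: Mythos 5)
Your proof is correct and follows essentially the same route as the paper: reduce to $\psi(n)$ via Theorem~\ref{Fpref} and realize prescribed frequencies of the blocks $00$, $01$, $001$ in the Fibonacci representation. You are in fact more complete than the paper's three-line sketch, since you explicitly derive the two-sided bounds $\liminf a_n\ge 1/3$ and $\limsup a_n\le 1/2$ from the block-length count and verify the uniqueness of the parse of $1(01)^{a_m}(001)^{b_m}$, both of which the paper leaves implicit.
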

\begin{proof}
One can take $n_i=\Phi_i=(10^i)_{Fib}$ for $\limsup$ and $n_i=(1(001)^i)_{Fib}$ for $\liminf$. Any intermediate limit point $\alpha$ can be obtained by taking a sequence of numbers $n_i$ having Fibonacci representations of the form $1x_1\cdots x_iz$ with the fraction of the factors 001 approaching $3-6\alpha$.  
\end{proof}

\begin{corollary} \label{Ftime}
Suppose that $n$ is an arbitrary number such that copying, addition, subtraction, and comparison of numbers up to $n$ can be performed in constant time. Then\\
(1) the value $\ZF[n]$ can be computed in $O(\log n)$ time and space;\\
(2) the array with $n$ elements $\ZF[1],\ldots,\ZF[n]$ can be computed in sublinear time, namely, in time $O(n \log\log n/\log n)$.
\end{corollary}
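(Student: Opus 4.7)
\textbf{Part (1).} By Theorem~\ref{Fpref}, $\ZF[n]=\psi(n)$, so it suffices to compute $\psi(n)$ in $O(\log n)$ time and space. The plan has three standard stages: (i) precompute $\Phi_0,\ldots,\Phi_L$ with $\Phi_L\le n<\Phi_{L+1}$ by iterated addition, where $L=O(\log n)$; (ii) compute the Fibonacci representation $w$ of $n$ greedily by repeated subtraction of the largest fitting $\Phi_i$; (iii) scan $w$ once from left to right to produce its unique factorisation $w=1\,x_1\cdots x_{k-1}\,z$ with $x_i\in\{00,001,01\}$ and $z\in\{\lambda,0\}$. By Theorem~\ref{Fpref}, the count $k$ equals $\psi(n)=\ZF[n]$, and each stage costs $O(\log n)$ under the assumed arithmetic model.

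\textbf{Part (2).} The plan is to exploit word-level parallelism and produce the output in bit-packed form. Since $\ZF[i]\le\log_\phi i$ fits in $b=O(\log\log n)$ bits, $B=\Theta(\log n/\log\log n)$ consecutive entries pack into one $\Theta(\log n)$-bit machine word, and the packed array occupies $O(n\log\log n/\log n)$ words, matching the target. The algorithm must therefore emit one packed output word in $O(1)$ amortised time.

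I would use a Four-Russians tabulation. Let $\ell=\tfrac12\log_\phi n$. Precompute a table indexed by (i) a Fibonacci-admissible window of $\ell$ bits and (ii) an $O(\log\log n)$-bit parse state; each entry records the updated state and the (packed) $\psi$-increments produced while scanning the window, and is computed in $O(\log n)$ time. There are only $O(\sqrt n)$ admissible windows, so preprocessing costs $O(\sqrt n\log n)=o(n\log\log n/\log n)$. The main loop streams $i=1,\ldots,n$ in blocks of $B$ integers, holds the Fibonacci representation of the current $i$ in one word, and for each block performs a constant number of lookups that yield both the packed $\psi$-word to output and the advanced representation. The subtle point is long carry propagation during Zeckendorf incrementation that escapes the current window; by the classical amortised bound, the total number of bit flips across all increments from $1$ to $n$ is $O(n)$, which when processed in $\Theta(\log n)$-bit word chunks using shifts, masks, and additional lookups contributes only $O(n/\log n)$ extra work, well within the target bound.
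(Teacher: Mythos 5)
Your part (1) is essentially the paper's own argument: compute the Fibonacci representation of $n$ greedily in $O(\log n)$ time, then scan it once to extract $\psi(n)$, which equals $\ZF[n]$ by Theorem~\ref{Fpref}. That is fine.

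Part (2) takes a genuinely different route and, as written, has a gap at its central step. Your table is indexed by an $\ell$-bit window of a \emph{single} Zeckendorf representation together with a parse state; such a table lets you finish the parse of one representation in $O(\log n/\ell)=O(1)$ lookups, hence compute one value $\psi(i)$ in $O(1)$ time. Doing this for every $i$ gives $O(n)$ total work, which is \emph{not} sublinear and misses the target $O(n\log\log n/\log n)$. To emit a packed word of $B=\Theta(\log n/\log\log n)$ values of $\psi$ per $O(1)$ lookups, the table index would have to encode enough information to determine $\psi(i),\dots,\psi(i{+}B{-}1)$ simultaneously — e.g.\ the parse state of the shared high-order part of the representations plus the low-order $O(\log\log n)$ digits of $i$, with a separate accounting of the blocks in which a carry invalidates the shared prefix. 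None of this is set up in your proposal: the object you tabulate simply does not have $B$ answers in it. The amortised $O(n)$ bound on carry work is correct but irrelevant to this missing step, since even carry-free blocks are not handled in $O(1)$ time by the structure you describe. The approach is probably salvageable along the lines just sketched, but the key construction is absent.

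The paper avoids all of this machinery by not computing $\psi(i)$ from scratch at all. Combining the recurrence (\ref{eq1}) with the explicit piecewise formula (\ref{SB}) for $\SB$, one sees that on each of the $O(\log n)$ ranges between consecutive Fibonacci lengths, $\SB[n]$ is an arithmetic progression with step $1$; hence the entire array $\ZF[1..n]$ is produced by $O(\log n)$ operations of the form ``copy a contiguous, already-computed block and increment every element''. Since each entry fits in $O(\log\log n)$ bits, packing $\log n/\log\log n$ entries per machine word makes each such operation cost $O(1)$ word operations per word of the block, giving $O(n\log\log n/\log n)$ overall with no tables, no Zeckendorf increments, and no carry analysis. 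The lesson is that the self-similarity of $\SB$ established in Lemma~\ref{FSB} does the work that your Four-Russians scheme is trying to recreate by brute force.
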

\begin{proof}
For statement 1, one can compute the Fibonacci representation $(w)_{Fib}$ of $n$ as follows. First, Fibonacci numbers are calculated in ascending order until the last number $\Phi_k$ exceeds $n$; second, the length of $(w)_{Fib}$ is set to $k$, the leading digit is set to 1, and $\Phi_{k-1}$ is subtracted from $n$ to get the remainder $n'$; third, Fibonacci numbers are calculated in descending order and $(w)_{Fib}$ is filled with zeroes until the number $\Phi_{k'}<n'$ is found; then 1 is appended to $w$, $n'$ is set to $n'-\Phi_{k'}$, and the previous step is repeated until $n'>0$; finally, the rest of $w$ is filled with zeroes. At any moment, only two Fibonacci numbers are stored (those last computed). Clearly, $|w|=O(\log n)$ and the whole computation takes $O(|w|)$ time and space. After this, $\psi(n)$ is calculated from $w$, again in $O(|w|)$ time.

Now let us prove statement 2. Combining (\ref{eq1}) with (\ref{SB}), we see that the only operations used in the construction of the array $\ZF[1..n]$ are ``copy a block'' and ``increment all elements of a block''. Each element of the array $\ZF[1..n]$ is of size $O(\log\log n)$. Thus, one can pack each $\log n/\log\log n$ array values into one cell and perform the number of operations which is linear in the number of cells.
\end{proof}

Next we analyze Zimin types of arbitrary factors of Fibonacci words. In what follows, we refer to such factors as \emph{Fibonacci factors}. The following theorem shows that the type of any Fibonacci factor is majorized by the type of a relatively short Fibonacci word.

\begin{theorem}\label{Ffactors}
Suppose that $w$ is a Fibonacci factor, $n$ is the last position of the leftmost occurrence of $w$ in $F_{\infty}$, and $k$ is such that $\Phi_{2(k-1)}\le n< \Phi_{2k}$. Then $\ZT(w)\le k=\ZT(F_{2(k-1)})$. 
\end{theorem}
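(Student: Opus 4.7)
The statement has two claims: (a) $k=\ZT(F_{2(k-1)})$ and (b) $\ZT(w)\le k$. Claim (a) follows immediately from Theorem~\ref{Fpref}: the Fibonacci representation of $\Phi_{2(k-1)}$ is $10^{2(k-1)}$, which admits the forced decomposition $1\cdot(00)^{k-1}$ into blocks of the form used to define $\psi$, so $\psi(\Phi_{2(k-1)})=k$ and hence $\ZT(F_{2(k-1)})=\ZF[\Phi_{2(k-1)}]=k$.

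For claim (b), I plan to prove the following equivalent strengthening by induction on $m$: every Fibonacci factor $w$ with $\ZT(w)\ge m$ satisfies $|w|\ge\Phi_{2(m-1)}$. Since $n\ge|w|$, the contrapositive of this statement, combined with $n<\Phi_{2k}$, gives $\ZT(w)\le k$. The base cases $m\le 2$ are immediate. For $m\ge 3$, Lemma~\ref{recurr} provides a factorization $w=uvu$ with $u=ShortBord(w)$, $\ZT(u)\ge m-1$, and $v$ non-empty; by the inductive hypothesis applied to the Fibonacci factor $u$, we get $|u|\ge\Phi_{2(m-2)}$. Let $j^{*}$ be the largest index with $\Phi_{j^{*}}\le|u|$, so $j^{*}\ge 2(m-2)$.

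The main obstacle is the Fibonacci-combinatorial step: showing that the shift $s$ between the two occurrences of $u$ inside $w$ in $F_\infty$ (which satisfies $s>|u|$ since $v$ is non-empty) is at least $\Phi_{j^{*}+1}$. My plan is to invoke the classical Sturmian property that the two return times of every factor of $F_\infty$ are consecutive Fibonacci numbers $\Phi_a,\Phi_{a+1}$; consequently, any shift between two (possibly non-consecutive) occurrences of $u$ is a non-negative integer combination of $\Phi_a$ and $\Phi_{a+1}$ realized by the Sturmian return-time sequence. A short case analysis on whether $|u|<\Phi_{a+1}$ (in which case a single basic shift $s=\Phi_{a+1}\ge\Phi_{j^{*}+1}$ works) or $|u|\ge\Phi_{a+1}$ (forcing the use of the sum $\Phi_a+\Phi_{a+1}=\Phi_{a+2}$, and still $s\ge\Phi_{j^{*}+1}$ by the choice of $j^{*}$) shows that the bound holds in all cases. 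Combining, $|w|=|u|+s\ge\Phi_{j^{*}}+\Phi_{j^{*}+1}=\Phi_{j^{*}+2}\ge\Phi_{2(m-1)}$, which completes the induction.
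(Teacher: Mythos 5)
Your overall skeleton is sound and is essentially the contrapositive of the paper's argument: the paper iterates $ShortBord$ starting from $w$ and uses the key drop lemma ``$|w|<\Phi_j$ implies $|ShortBord(w)|<\Phi_{j-2}$'' (Lemma~\ref{Fbord}) to show the iteration reaches $\lambda$ within $k$ steps, while you induct on $m=\ZT(w)$ and need the equivalent statement ``$|ShortBord(w)|\ge\Phi_{j}$ implies $|w|\ge\Phi_{j+2}$''. Your part (a) and the reduction of part (b) to this two-index drop are correct. The difference lies entirely in how the drop is proved: the paper derives it from the minimal-period structure of Fibonacci factors (Lemma~\ref{Fper}, quoted from Mignosi--Pirillo), whereas you appeal to return words.

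That return-word step is where the gap is. Even granting the (true but nontrivial, and uncited) fact that every factor $u$ of $F_{\infty}$ has exactly two return times $\Phi_a,\Phi_{a+1}$, your case analysis does not close. In your second case ($|u|\ge\Phi_{a+1}$) the shift $s$ is not ``forced'' to be $\Phi_a+\Phi_{a+1}$: a priori $s$ could equal $2\Phi_a$, which satisfies $s>|u|$ whenever $\Phi_{a+1}\le|u|<2\Phi_a$ and yet is strictly smaller than $\Phi_{a+2}\le\Phi_{j^*+1}$. Excluding this requires knowing that the short return word never occurs twice in a row (i.e., that the sequence of returns of $u$ is itself Fibonacci-like), a fact you neither state nor justify. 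Moreover, even after securing $s\ge\Phi_{a+2}$ you still need $\Phi_{a+2}\ge\Phi_{j^*+1}$, i.e., $|u|<\Phi_{a+2}$; this does not follow ``by the choice of $j^{*}$'' (which only bounds $|u|$ from below) but needs an upper bound on the length of a factor whose minimal return time is $\Phi_a$ --- and proving that essentially brings you back to the period-length bound of Lemma~\ref{Fper}\,(2) that the paper uses directly. The route is repairable, but as written the central inequality $s\ge\Phi_{j^{*}+1}$ is not established.
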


The proof is based on two lemmas. 

\begin{lemma}[{\cite{MiPi92}}] \label{Fper}
(1) The minimal period of any Fibonacci factor is a Fibonacci number.\\
(2) The length of a Fibonacci factor of period $\Phi_k$ is at most $\Phi_{k+1}+2\Phi_k-2$.
\end{lemma}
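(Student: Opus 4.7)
The plan is to derive both parts from the substitutive structure of $F_\infty$ as the fixed point of $\sigma\colon a\mapsto ab$, $b\mapsto a$, together with a classification of squares occurring in $F_\infty$. These are two ingredients whose Mignosi--Pirillo proof is known to be the technical core of the lemma; I will show how each ingredient feeds into the stated conclusions.

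For part (1), let $w$ be a Fibonacci factor with minimal period $p\ge 2$. I would first reduce to the case $|w|\ge 2p$. Since $F_\infty$ is uniformly recurrent, $w$ extends to arbitrarily long Fibonacci factors; if during such an extension the minimal period stays at $p$ I keep going until the length reaches $2p$, otherwise I stop at the first extension that breaks the period. In the first case $w$ contains a square $XX$ with $|X|=p$, and then the classical square-classification theorem for $F_\infty$ (every square in $F_\infty$ has root conjugate to some $F_j$) yields $p=\Phi_j$. In the second case, $w$ is inside a maximal run of period $p$, so the same classification applies to the square $XX$ sitting inside that run and again forces $p$ to be Fibonacci.

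For part (2), suppose $w$ is a Fibonacci factor of minimal period $\Phi_k$ and extend it on both sides as long as the period $\Phi_k$ survives, obtaining a maximal extension $W$. By the analysis above, $W$ contains a square with root of length $\Phi_k$, conjugate to $F_k$. A direct computation on the $\sigma$-hierarchy then shows that from such a square the period $\Phi_k$ can propagate at most $\Phi_{k-1}-1$ letters on one side and at most $\Phi_k-1$ letters on the other, the asymmetry coming from the standard description of bispecial factors of the Fibonacci word. Summing yields $|W|\le 2\Phi_k+(\Phi_{k-1}-1)+(\Phi_k-1)=\Phi_{k+1}+2\Phi_k-2$, the desired bound.

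The main obstacle is the square-classification theorem, since everything else is arithmetic on the Fibonacci hierarchy. Its self-contained proof is an induction via the inverse substitution $\sigma^{-1}$: a square $XX$ in $F_\infty$ decodes, after a possible one-letter synchronization shift, to a shorter square $YY$ in $F_\infty$, and the root lengths relate as $|X|=|Y|+\Phi_j$ for some $j$ determined by the boundary letters. The induction base covers the small squares $aa$ and $(ab)^2$, whose roots have Fibonacci length by inspection, and the inductive step then shows $|X|$ is Fibonacci. Once this lemma is granted, the bound in part (2) is a finite verification using Lemma~\ref{sqr} and the explicit shape of $F_k F_k$.
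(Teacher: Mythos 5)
The paper itself offers no proof to compare against: Lemma~\ref{Fper} is imported wholesale from \cite{MiPi92}, so your attempt stands or falls on its own merits --- and part (1) falls at the reduction step. Your dichotomy is: either this occurrence of $w$ extends, keeping minimal period $p$, until it reaches length $2p$, or the period breaks earlier. In the first case you legitimately obtain a square of root length $p$ and the square-classification theorem applies. But in the second case you say $w$ ``is inside a maximal run of period $p$'' and apply the classification ``to the square $XX$ sitting inside that run'' --- and that square does not exist. By the very definition of this case, the longest extension of the occurrence having period $p$ is shorter than $2p$, i.e.\ has exponent strictly less than $2$; it contains no factor of length $2p$ with period $p$, and no run of period $p$ contains this occurrence at all, since a run by definition has exponent at least $2$. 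What your argument needs is precisely that this second case never occurs in $F_\infty$, i.e.\ that every factor of the Fibonacci word realizes its minimal period inside a square. That is a genuinely Fibonacci-specific fact: it fails in general infinite words (in an infinite square-free ternary word the factor $cac$ has minimal period $2$, yet no square of period $2$ occurs anywhere), so it cannot come for free from generalities about extensions, and proving it is essentially the hard kernel of part (1). Square classification alone covers only factors of exponent at least $2$.

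Part (2) suffers less from the missing square --- if the maximal extension $W$ is shorter than $2\Phi_k$, the bound holds trivially --- but its quantitative core is asserted rather than proved. The claim that beyond a square of root length $\Phi_k$ the period propagates at most $\Phi_{k-1}-1$ letters on one side and $\Phi_k-1$ on the other \emph{is} the critical-exponent theorem of Mignosi and Pirillo that this lemma encapsulates; writing ``a direct computation on the $\sigma$-hierarchy shows'' amounts to assuming the result. Moreover, as stated the claim is false for an arbitrary square inside the run: the Fibonacci factor $abaabaaba$ is a maximal run of period $3=\Phi_2$, and for the square $(aba)^2$ occupying its first six letters the remaining three letters of the run all lie on one side, exceeding your per-side bound $\Phi_2-1=2$. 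The bound holds only for a well-chosen conjugate occurrence of the square, and identifying that occurrence requires exactly the synchronization/desubstitution analysis you only gesture at. So both halves of the proposal ultimately rest on unproved assertions whose strength is comparable to the lemma itself.
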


\begin{lemma} \label{Fbord}
Any Fibonacci factor $w$ such that $|w|<\Phi_k$ satisfies the inequality $|ShortBord(w)|<\Phi_{k-2}$.
\end{lemma}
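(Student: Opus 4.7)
The plan is to combine Lemma~\ref{Fper} with the Fine--Wilf periodicity lemma applied to two periods of $w$, and to induct on $k$.

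First I would reduce to the tight range $\Phi_{k-1}\le|w|<\Phi_k$: for shorter $w$ the inductive hypothesis at index $k{-}1$ already gives $|ShortBord(w)|<\Phi_{k-3}<\Phi_{k-2}$, so nothing remains. Under this reduction I argue by contradiction, assuming that $w$ admits a border $v$ with $|v|\ge\Phi_{k-2}$ that is nevertheless short, i.e.\ $|v|<|w|/2$.

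Let $q=|w|-|v|$ be the period induced by $v$, and let $\Phi_j$ be the minimal period of $w$, which is a Fibonacci number by Lemma~\ref{Fper}(1). The assumption $|v|\ge\Phi_{k-2}$ forces $q<\Phi_{k-1}$ and hence $j\le k{-}2$; meanwhile, Lemma~\ref{Fper}(2) together with $|w|\ge\Phi_{k-1}$ forces $j\ge k{-}3$ (for $j\le k{-}4$ the bound $\Phi_{j+2}+\Phi_j-2$ already falls below $\Phi_{k-1}$). So only $j\in\{k{-}2,k{-}3\}$ survive.

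Next, since $\Phi_j\le\Phi_{k-2}\le|v|$, the Fine--Wilf hypothesis $\Phi_j+q\le|w|$ is satisfied, and minimality of $\Phi_j$ yields $\Phi_j\mid q$; write $q=m\Phi_j$. I would finish with a short case analysis. For $j=k{-}2$ the inequality $q<\Phi_{k-1}<2\Phi_{k-2}$ leaves only $m=1$, and then $|v|\ge\Phi_{k-2}$ is equivalent to $|v|\ge|w|/2$, contradicting shortness. For $j=k{-}3$, the subcase $m=1$ is handled in the same way using $|w|\ge\Phi_{k-1}>2\Phi_{k-3}$, and the subcase $m=2$ is ruled out by the upper bound $|w|\le\Phi_{k-1}+\Phi_{k-3}-2=\Phi_{k-2}+2\Phi_{k-3}-2$ from Lemma~\ref{Fper}(2), which lies strictly below the threshold $\Phi_{k-2}+2\Phi_{k-3}$ needed for $|v|$ to reach $\Phi_{k-2}$.

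The main obstacle will be this last subcase, where ruling out $m=2$ relies on the exact additive $-2$ from Lemma~\ref{Fper}(2); the remaining manipulations are routine bookkeeping with the identity $\Phi_{k-1}=\Phi_{k-2}+\Phi_{k-3}$. A handful of small base cases ($k\le 4$) would also need to be checked directly from the short prefixes of $F_\infty$.
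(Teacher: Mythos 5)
Your argument is correct, but it takes a genuinely different route from the paper's. The paper also starts from the minimal period $p=\Phi_j$ of $w$ and leans on Lemma~\ref{Fper}, but instead of Fine--Wilf it splits on whether $p>|w|/2$: in that case $ShortBord(w)=Bord(w)$ has length $|w|-p$ and the bound follows by direct arithmetic, while for $p\le|w|/2$ it uses primitivity of the length-$p$ prefix $u$ (the word $uu$ contains no internal occurrence of $u$) to show that the short border $x$ is either a proper prefix of $u$, hence of length $<p\le\Phi_{k-2}$, or forces $uux$ to be a prefix of $w$, which combined with Lemma~\ref{Fper}(2) yields $|x|<\Phi_{k-l+1}\le\Phi_{k-2}$. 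You replace the synchronization property by the Fine--Wilf theorem, which buys the cleaner divisibility statement $\Phi_j\mid q$ and reduces everything to arithmetic on $m$ against the length bound of Lemma~\ref{Fper}(2); the price is the extra reduction to the window $\Phi_{k-1}\le|w|<\Phi_k$ and the attendant base cases, neither of which the paper needs. Two points to make explicit in a write-up: (i) for $j=k-3$ you should say why $m\ge3$ is impossible --- it follows from your own bound $q<\Phi_{k-1}=2\Phi_{k-3}+\Phi_{k-4}<3\Phi_{k-3}$, or alternatively the length bound that kills $m=2$ kills $m\ge3$ a fortiori since $|v|=|w|-m\Phi_{k-3}$ only decreases; (ii) strict inequalities such as $\Phi_{k-3}<\Phi_{k-2}$ degenerate to equalities at the bottom of the index range (the paper's conventions give $\Phi_{-1}=\Phi_0=1$), which is harmless but belongs with your base-case check for $k\le4$.
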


\begin{proof}
Let $p$ be the minimal period of $w$ and let $x=ShortBord(w)$. Since $|w|-|x|$ is a period of $w$, one has $|w|-p\ge|x|$. By the definition of $ShortBord$,
\begin{equation} \label{wpx}
|x|=|w|-p \text{ if } p>|w|/2 \text{ and } |x|<|w|-p \text{ otherwise}\,.
\end{equation}
Consider the case $p>|w|/2$. By Lemma~\ref{Fper}\;(1), $p$ is a Fibonacci number. If $p=\Phi_{k-1}$ then by (\ref{wpx})
$$
|x|=|w|-\Phi_{k-1}<\Phi_k-\Phi_{k-1}=\Phi_{k-2},
$$
as required. If $p\le\Phi_{k-2}$, then $|x|<\Phi_{k-2}$ since $|x|<|w|/2<p$.

Now let $p\le|w|/2$. Then clearly $p<\Phi_{k-1}$. Let $u$ be the prefix of $w$ of length $p$. By minimality of $p$, $u$ is \emph{primitive} (not a power of a shorter word). A basic characterization of primitive words is that the word $uu$ contains no ``internal'' occurrences of $u$. Let $p=\Phi_{k-2}$. If $x$ has $u$ as a prefix, then $w$ should have the prefix $uux$ due to the above mentioned property of $uu$. But $|uux|\ge|uuu|=3\Phi_{k-2}\ge\Phi_k$, a contradiction. Hence, $x$ is a proper prefix of $u$, i.e., $|x|<p$. Finally, let $p=\Phi_{k-l}$ for some $l\ge3$. Once again, if $u$ is a prefix of $x$, then $uux$ is a prefix of $w$. Thus, $|w|\ge2\Phi_{k-l}+|x|$. On the other hand, Lemma~\ref{Fper}\;(2) says that $|w|\le \Phi_{k-l+1}+2\Phi_{k-l}-2$. Comparing the two inequalities, we get $|x|<\Phi_{k-l+1}\le\Phi_{k-2}$, as required.
\end{proof}

\begin{proof}[Proof of Theorem~\ref{Ffactors}]
By the conditions of the theorem, $|w|\le n<\Phi_{2k}$. Define a finite sequence of words by putting $w_0=w$ and $w_{t+1}=ShortBord(w_t)$ for all $t\ge0$ such that $w_t\ne\lambda$. Assume that $\bar t$ is such that $w_{\bar{t}}=\lambda$. Then $\ZT(w)=\bar t$ by (\ref{eq1}). On the other hand, $\bar t\le k$. Indeed, if $w_k$ exists, then a $k$-fold application of Lemma~\ref{Fbord} implies that $w_k$ is shorter then $F_0=a$, i.e., $w_k=\lambda$. Thus, $\ZT(w)\le k$. It remains to note that $\Phi_{2(k-1)}=(1\cdot(00)^{k-1})_{Fib}$, and hence $\ZT(F_{2(k-1)})=k$ by Theorem~\ref{Fpref}. 
\end{proof}

In general, it is easier to find Zimin type of a word $w$ than to give an embedding of the Zimin pattern of a maximal possible rank into $w$, see Sect.~\ref{sec:alg}. But for the prefixes of the Fibonacci words both problems have the same complexity, and the algorithm for the latter problem is even simpler than the algorithm for the former one. Indeed, Theorem~\ref{Ffactors} implies that the maximum of Zimin types of the factors for any word $F_\infty[1..n]$ is achieved on its prefix $F_{2(k-1)}$, where $\Phi_{2(k-1)}\le n< \Phi_{2k}$, and is equal to $k$. The embedding of $Z_k$ into $F_{2(k-1)}$ can be immediately obtained from the observation that $Shortbord(F_j)=F_{j-2}$: $x_k\to F_{2k-5}$, $x_{k-1}\to F_{2k-7}$, \ldots, $x_2\to F_{-1}(=b)$, $x_1\to a$. Since $k$ can be computed from $n$ in logarithmic time (cf. Corollary~\ref{Ftime}), we get the following

\begin{theorem} \label{Flog}
Suppose that $n$ is an arbitrary number such that addition and comparison of numbers up to $n$ can be performed in constant time. Then the maximal rank of a Zimin pattern embeddable in $F_\infty[1..n]$ and a morphism for such an embedding can be found in $O(\log n)$ time.
\end{theorem}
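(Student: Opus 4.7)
The plan is to combine Theorem~\ref{Ffactors} with the explicit embedding of $Z_k$ into $F_{2(k-1)}$ already sketched in the paragraph preceding the theorem. First, I would compute the index $k$ satisfying $\Phi_{2(k-1)}\le n< \Phi_{2k}$ by iterating the Fibonacci recurrence from $\Phi_0,\Phi_1,\ldots$ until the first index whose Fibonacci number exceeds $n$, then reading off $k$. Since $\Phi_i$ grows like $\phi^i$, this loop runs $O(\log n)$ times and each step uses one addition and one comparison on numbers bounded by $n$, which are constant time by assumption.

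By Theorem~\ref{Ffactors}, the maximum Zimin type of any factor of $F_\infty[1..n]$ is exactly $k$, attained already by the prefix $F_{2(k-1)}$, so this $k$ is the answer to the first part of the problem. To produce a morphism $h$ with $h(Z_k)=F_{2(k-1)}$, I would exploit the identity
\[
F_j\;=\;F_{j-1}F_{j-2}\;=\;F_{j-2}F_{j-3}F_{j-2},
\]
which matches the Zimin recurrence $Z_k=Z_{k-1}\,x_k\,Z_{k-1}$ after the substitution $j=2(k-1)$, giving $Z_{k-1}\mapsto F_{2(k-2)}$ and $x_k\mapsto F_{2k-5}$. A one-line induction on $k$ then shows that the morphism defined by
\[
h(x_1)=a=F_0,\qquad h(x_j)=F_{2j-5}\ \text{ for }\ j=2,\ldots,k,
\]
satisfies $h(Z_k)=F_{2(k-1)}$ (note that $2\cdot 2-5=-1$, so $h(x_2)=F_{-1}=b$, as expected). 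The algorithm outputs $k$ together with the list of Fibonacci indices $\{2j-5:j=2,\ldots,k\}$; each image is represented compactly by its index, which fits in a constant number of machine words since $k=O(\log n)$.

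The total cost is $O(\log n)$ for finding $k$ plus $O(k)=O(\log n)$ for writing out the morphism, yielding the claimed bound. The main obstacle is essentially nonexistent, because Theorem~\ref{Ffactors} already certifies both the value $k$ and the target prefix $F_{2(k-1)}$; the only residual work is the short verification of the embedding identity, which reduces to iterating $F_j=F_{j-2}F_{j-3}F_{j-2}$ exactly $k-1$ times.
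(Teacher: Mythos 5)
Your proposal is correct and follows essentially the same route as the paper: invoke Theorem~\ref{Ffactors} to identify $k$ with $\Phi_{2(k-1)}\le n<\Phi_{2k}$ as the answer, compute $k$ by iterating the Fibonacci recurrence in $O(\log n)$ steps, and output the explicit morphism $x_1\to a$, $x_j\to F_{2j-5}$ for $j\ge 2$, which is exactly the embedding the paper derives (via the equivalent observation $ShortBord(F_j)=F_{j-2}$, i.e.\ $F_j=F_{j-2}F_{j-3}F_{j-2}$).
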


\section{Some combinatorial issues}

\subsection{More on Zimin type sequences}

By Zimin type sequence of an infinite word $w$ we mean the sequence $\{\ZT[i]\}_{i=1}^\infty$ of Zimin types of its prefixes.

\begin{remark}
Zimin type sequence of a word is unbounded if and only if this word is uniformly recurrent (i.e., any its factor occurs in it infinitely often with a bounded gap). This fact was mentioned, in particular, in \cite{LuVa99}.
\end{remark}

Fibonacci words provide extremal examples for many problems, but Zimin type sequences can grow faster than the sequence for the infinite Fibonacci word. Indeed, an example of the fastest asymptotic growth is given by any word $g[1..\infty]$ such that for any $n$ the word $g[1..2^n{-}1]$ is an image of $Z_n$. We call such infinite words \emph{Zimin encodings} because they are images of the infinite Zimin word under a letter-to-letter morphism (a \emph{coding}). The following example shows that the class of Zimin encodings is far from being trivial and contains, e.g., aperiodic binary words generated by morphisms. 

\begin{example}
The word generated by the binary morphism $a\to abaa$, $b\to abab$ is a Zimin encoding:\\[2mm]
\centerline{
\unitlength=1mm
\begin{picture}(92,21)(-5,-13)
\put(0.5,3.5){\makebox(0,0)[lb]{$a\,b\,a\,a\ a\,b\,a\,b\ a\,b\,a\,a\ a\,b\,a\,a\ a\,b\,a\,a\ a\,b\,a\,b\ a\,b\,a\,a\ a\,b\,a\,b\,\ldots$}}
\put(-1,0){\makebox(0,0)[rb]{\small$Z_2$}}
\put(-1,-4){\makebox(0,0)[rb]{\small$Z_3$}}
\put(-1,-8){\makebox(0,0)[rb]{\small$Z_4$}}
\put(-1,-12){\makebox(0,0)[rb]{\small$Z_5$}}
\put(0,0){\line(1,0){8.3}}
\put(8.3,0){\line(0,1){3}}
\put(0,-4){\line(1,0){19.7}}
\put(19.7,-4){\line(0,1){7}}
\put(0,-8){\line(1,0){43.3}}
\put(43.3,-8){\line(0,1){11}}
\put(0,-12){\line(1,0){89.8}}
\put(89.8,-12){\line(0,1){15}}
\multiput(0,0)(0,-4){4}{\line(0,1){3}}
\end{picture}
}
\end{example}

Since $|Z_n|=2^n-1$, any Zimin type sequence is less or equal (in the coordinate-wise order) than the sequence
\begin{equation} \label{max}
1^2\,2^4\,3^8\cdots n^{2^n}\cdots
\end{equation}

\begin{lemma}
Any infinite word reaching the maximum (\ref{max}) is unary. 
\end{lemma}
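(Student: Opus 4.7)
The plan is to apply the recurrence of Lemma~\ref{recurr} at two families of positions: first at the lengths $i=2^k-1$, which forces the global structure of $w$ to be a Zimin encoding, and then at the lengths $i=2^k$, which yields one extra identity strong enough to collapse the encoding to a single letter. Throughout, I would assume that $w$ achieves the bound (\ref{max}), so that $\ZT[i]=\lfloor\log_2(i+1)\rfloor$ for every $i\ge 1$.

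For the first family, I would note that the variable $x_j$ appears $2^{k-j}$ times in $Z_k$, so $|h(Z_k)|=\sum_{j=1}^{k}2^{k-j}|h(x_j)|\ge 2^k-1$ with equality iff $|h(x_j)|=1$ for every $j$. Since $|w[1..2^k-1]|=2^k-1=|Z_k|$ and $\ZT[2^k-1]=k$, the witnessing morphism must be letter-to-letter. Hence $w$ is a Zimin encoding: writing $v(i)$ for the $2$-adic valuation of $i$, the letter $w[i]$ depends only on $v(i)$, and $w[i]=w[2^{v(i)}]$.

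For the second family, I would pin down $\SB[2^k]$. From the recurrence and the maximality assumption, $\lfloor\log_2(\SB[2^k]+1)\rfloor=\ZT[\SB[2^k]]=k-1$, so $\SB[2^k]\ge 2^{k-1}-1$; combined with the short-border inequality $\SB[2^k]<2^{k-1}$, this forces $\SB[2^k]=2^{k-1}-1$ for every $k\ge 2$. Unpacked, this gives $w[1..2^{k-1}-1]=w[2^{k-1}+2..2^k]$. On the other hand, the Zimin decomposition $Z_k=Z_{k-1}x_kZ_{k-1}$ combined with the letter-to-letter property yields $w[1..2^{k-1}-1]=w[2^{k-1}+1..2^k-1]$. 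Overlaying these two identities gives $w[i]=w[i+1]$ for every $i\in[2^{k-1}+1,2^k-1]$, so the block $w[2^{k-1}+1..2^k]$ consists of a single repeated letter; since $2^{k-1}+1$ is odd for $k\ge 2$, that letter must be $w[1]$.

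Inside this constant block sit the positions $2^{k-1}+2^j$ for $j=0,1,\ldots,k-2$. Each such position satisfies $v(2^{k-1}+2^j)=j$, so the Zimin-encoding identity from the first step gives $w[2^j]=w[2^{k-1}+2^j]=w[1]$. Letting $k\to\infty$ produces $w[2^j]=w[1]$ for every $j\ge 0$, and then the Zimin-encoding identity $w[i]=w[2^{v(i)}]$ forces $w[i]=w[1]$ for every $i$, i.e., $w$ is unary. The one step I expect to require some care is the overlay in the third paragraph: one must correctly align the Zimin factorization of $w[1..2^k-1]$ with the border identity coming from $\SB[2^k]$, so that the shift $w[i]=w[i+1]$ really holds across the entire interval $[2^{k-1}+1,2^k-1]$. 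Once that shift identity is in hand, extracting unariness from the constant block is purely arithmetic.
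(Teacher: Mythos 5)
Your proof is correct and follows essentially the same route as the paper: both arguments extract the two overlapping identities $w[1..2^{k-1}{-}1]=w[2^{k-1}{+}1..2^k{-}1]$ and $w[1..2^{k-1}{-}1]=w[2^{k-1}{+}2..2^k]$ (you obtain the second via $\SB[2^k]=2^{k-1}{-}1$, the paper via the forced image lengths at the prefix of length $2^k$, which amounts to the same thing) and conclude that each block $w[2^{k-1}{+}1..2^k]$ is constant equal to $w[1]$. Your final propagation via $w[i]=w[2^{v(i)}]$ is a more explicit version of the paper's closing sentence.
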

\begin{proof}
The Zimin type of the word of length $2^n$ can be equal to $n$ only if the image of $x_n$ has length 2 while all other images of letters have length 1. Thus, a word $w$ having the Zimin type sequence (\ref{max}) satisfies the equalities $w[1..2^{n-1}{-}1]=w[2^{n-1}{+}1..2^n{-}1]$ and $w[1..2^{n-1}{-}1]=w[2^{n-1}{+}2..2^n]$ for any $n>1$. Hence, all letters in the word $w[2^{n-1}{+}1..2^n]$ are equal (to $a=w[1]$). Using this observation for all $n$, we see that all letters in $w$ are equal to $a$.
\end{proof}

\subsection{Length bounds on unavoidability of Zimin patterns}

Let $f(n,k)$ be the minimum number such that the pattern $Z_n$ can be embedded in every word of length at least $f(n,k)$ over a size $k$ alphabet. Here we prove initial facts about this astronomically growing function.

\begin{theorem}
The function $f(n,k)$ satisfies the restrictions given in Table~\ref{fnk}.
\end{theorem}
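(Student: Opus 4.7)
The plan is to prove each bound in Table~\ref{fnk} separately, dividing the work into base cases, a general upper bound, and a general lower bound. First I would dispose of the trivial row: $f(1,k)=1$ because $Z_1=x_1$ matches every single letter. For the $n=2$ row I would establish the upper bound by pigeonhole --- in a sufficiently long word over a size-$k$ alphabet, some letter $a$ must occur with two occurrences separated by a nonempty factor $v$, giving the embedding $x_1\mapsto a,\ x_2\mapsto v$ of $Z_2=x_1x_2x_1$ --- and the matching lower bound by an explicit avoider, e.g.\ a word of the form $1^{m_1}2^{m_2}\cdots k^{m_k}$ with suitably chosen multiplicities or a concatenation of distinct blocks arranged so that no factor $uvu$ with both parts nonempty ever arises.

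For the general upper bound on $f(n,k)$, the plan is to use a Zimin-style recursive argument driven by $Z_n=Z_{n-1}\,x_n\,Z_{n-1}$. Given a long word $w$ over an alphabet $\Sigma$ of size $k$, pigeonhole forces some letter $a\in\Sigma$ to occur very often; writing $w=u_0\,a\,u_1\,a\cdots a\,u_m$, the factors $u_i$ are words over a reduced combinatorial space. Choosing the length thresholds large enough, the induction hypothesis applied to a long block $u_i\,a\,u_{i+1}\,a\cdots a\,u_j$ produces an embedding of $Z_{n-1}$; a secondary pigeonhole on the ``type'' of such an embedding then yields two embeddings of the same type flanking a single occurrence of $a$, which splice into an embedding of $Z_n$ with $x_n\mapsto a$. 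Quantifying these thresholds gives a recursion of the shape $f(n,k)\le F\bigl(f(n-1,k')\bigr)$ that unrolls into the claimed tower-type bound.

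For the lower bound I would construct an explicit avoider over a $k$-letter alphabet of the length asserted in the table. A natural source is the material of Sect.~4.1: morphic words, and in particular words close to Zimin encodings, whose self-similar structure constrains how many disjoint variables any embedding can accommodate. The main obstacle is that closing the gap between the two directions is genuinely hard (the function is \emph{astronomically} growing, as the title of Sect.~4.2 suggests), and one should not hope here to match the upper and lower bounds beyond a tower-level separation. I would therefore aim only at the specific entries of Table~\ref{fnk}, treating the smallest numerical cases by direct counting or an appeal to a finite case analysis and leaving the asymptotic tightening as an open question in the same spirit as the broader unavoidability-complexity problem discussed in the introduction.
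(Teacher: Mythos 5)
Your handling of the first two rows matches the paper: $f(1,k)=1$ is trivial, and $f(2,k)=2k+1$ follows from the observation that $Z_2$ embeds iff some letter has two non-consecutive occurrences, with $a_1^2\cdots a_k^2$ as the extremal avoider. But your general upper-bound recursion has a real gap at the splicing step. Since $h(Z_{n+1})=h(Z_n)\,h(x_{n+1})\,h(Z_n)$, the two flanking factors must be \emph{the same word}, not merely ``embeddings of the same type''; a pigeonhole over some abstract type of embedding does not let you splice. The paper's engine (Lemma~\ref{next}) makes this precise: it pigeonholes over the finite set of $k$-ary \emph{minimal} words of Zimin type $n$ (words all of whose proper factors have smaller type), of cardinality $m(n,k)$, yielding $f(n{+}1,k)\le(f(n,k)+1)\cdot m(n,k)+f(n,k)$ --- two occurrences of the same minimal word $z$ in distinct blocks give a factor with $z$ as a short border, hence of type $\ge n{+}1$. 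Your auxiliary pigeonhole on a frequent letter $a$ with $x_{n+1}\mapsto a$ is both unnecessary and harmful: $x_{n+1}$ can simply absorb the entire gap between the two occurrences.

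The second gap is that your sketch cannot produce the actual table entries. The row $n=3$ requires the exact count $m(2,k)=k!\sum_{i=0}^{k-1}2^{k-1-i}/i!$, obtained by enumerating the minimal type-2 words $aaa$ and $ab_1^{j_1}\cdots b_r^{j_r}a$; feeding this into the recursion gives $29\dots$, wait, it gives $f(3,2)\le 41$ and the general bound $\sqrt{e}\cdot 2^r(r{+}1)!+2r+1$. The exact value $f(3,2)=29$ and the count $m(3,2)=7882$ (hence $f(4,2)\le 236489$) come from an exhaustive computer search that the bound $f(3,2)\le 41$ makes feasible; ``direct counting or a finite case analysis'' is not a substitute for specifying this. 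Finally, your lower-bound strategy is backwards: Zimin encodings are words into which \emph{every} $Z_n$ embeds (they witness maximal growth of Zimin type in Sect.~4.1), so they are the opposite of avoiders. The only nontrivial lower bound in the table beyond row 2 is $f(3,2)\ge 29$, again from the exhaustive search.
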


\begin{table}[htb]
\caption{Lengths guaranteeing the embedding of $Z_n$ into every word. A cell with the coordinates $(n,k)$ corresponds to $f(n,k)$.}  \label{fnk}
$$
\arraycolsep=2pt
\begin{array}{|l||c|c|c|c|c|c|c}
\hline
n\setminus k& 2&3&4&5&\ldots&r&\ldots\\
\hline
\hline
1&1&1&1&1&\ldots&1&\ldots\\
2&5&7&9&11&\ldots&2r{+}1&\ldots\\
3&29&\le 319&\le 3169&\le 37991&\ldots&\le \sqrt{e}\cdot 2^r(r{+}1)!+2r+1&\ldots\\
4&\le 236489&&&&&&\\
\hline
\end{array}
$$
\end{table}

\begin{proof}
The equality $f(1,k)=1$ is trivial. To justify other figures in Table~\ref{fnk}, we need some lemmas.

\begin{lemma} \label{embedZ2}
The pattern $Z_2$ embeds in a word $w$ if and only if some letter occurs in $w$ in two non-consecutive positions.
\end{lemma}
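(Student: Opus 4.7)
The plan is to prove both implications directly from the definition of $Z_2 = x_1 x_2 x_1$ and the meaning of embedding via a non-erasing morphism.

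For the forward direction, assume $Z_2$ embeds in $w$. Then $w$ has a factor of the form $uvu$ with $u,v$ non-empty. Let $a$ be the first letter of $u$, say occurring at position $p$ of $w$ in the first copy of $u$. Then $a$ also occurs at position $p+|u|+|v|$, the start of the second copy. Since $|u|+|v|\ge 2$, these two positions are non-consecutive.

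For the reverse direction, suppose a letter $a$ occurs at two positions $i<j$ of $w$ with $j-i\ge 2$. Then the factor $w[i..j]$ has the form $ava$ where $v=w[i{+}1..j{-}1]$ is non-empty. Defining the morphism $h$ by $h(x_1)=a$, $h(x_2)=v$ gives $h(Z_2)=w[i..j]$, so $Z_2$ embeds in $w$.

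The main obstacle here is essentially nil: the statement is a direct unpacking of the definition. The only subtlety is being careful that ``non-consecutive'' corresponds exactly to the existence of at least one intermediate position, which is precisely what is needed to make the middle image $h(x_2)$ non-empty in the morphism. I would write the proof as two short paragraphs mirroring the two directions above.
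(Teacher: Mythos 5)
Your proof is correct and follows essentially the same route as the paper's: both directions unpack the factorization $uvu$ (resp.\ $ava$) directly from the definition of embedding, with your version merely spelling out the position arithmetic that the paper leaves implicit. No issues.
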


\begin{proof}
If $w$ has a factor $h(x_1x_2x_1)$, then any letter occurring in $h(x_1)$ satisfies the required condition. Conversely, if $w$ has a factor $ava$ for a letter $a$ and a non-empty word $v$, then this factor is an image of $Z_2$.
\end{proof}

Lemma \ref{embedZ2} shows that $a_1^2\cdots a_k^2$, where $a_1,\ldots,a_k$ are distinct letters, is the longest $k$-ary word containing no images of $Z_2$. This fact explains the second row of Table~\ref{fnk}. In order to explain the values in the last two rows, we introduce a new notion. 
We say that a word of Zimin type $n$ is \emph{minimal} if any of its proper factors has Zimin type $<n$. Obviously, if $Z_n$ embeds in $w$, then $w$ contains a minimal word of Zimin type $n$ as a factor.

\begin{lemma} \label{next}
Let $m(n,k)$ be the number of $k$-ary minimal words of Zimin type $n$. Then the following inequality holds for any $n,k\ge2$:
\begin{equation} \label{fbound}
f(n{+}1,k) \le (f(n,k)+1)\cdot m(n,k) + f(n,k).
\end{equation}
\end{lemma}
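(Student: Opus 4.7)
The plan is to exhibit, in every sufficiently long $k$-ary word $w$, two disjoint occurrences of the same minimal word of Zimin type $n$ separated by at least one letter. Such a configuration yields $Z_{n+1}$ at once: if $u=h(Z_n)$ and $w$ has the factor $u\,v\,u$ with $v$ non-empty, then extending the morphism by $h(x_{n+1})=v$ embeds $Z_{n+1}=Z_n\,x_{n+1}\,Z_n$ in $w$.

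To realize this, I would take any $k$-ary word $w$ of length $(f(n,k)+1)m(n,k)+f(n,k)$ and partition its positions into $m(n,k)+1$ consecutive \emph{blocks} of length $f(n,k)$, separated by $m(n,k)$ single-letter \emph{gaps}. The length arithmetic matches exactly, since $(m(n,k)+1)f(n,k)+m(n,k)=(f(n,k)+1)m(n,k)+f(n,k)$. Each block, having length $f(n,k)$, contains an embedding of $Z_n$ by definition of $f$, and hence contains some minimal word of Zimin type $n$ as a factor (by the observation stated just before the lemma). As there are only $m(n,k)$ such minimal words over a $k$-ary alphabet, the pigeonhole principle gives two different blocks containing occurrences of the same minimal word $u$.

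The only point that demands care is to verify that the two chosen occurrences of $u$ are genuinely disjoint and enclose a non-empty factor of $w$; this is exactly why I would insert one-letter gaps between blocks. Any occurrence of $u$ inside the earlier block ends no later than the last position of that block, while any occurrence of $u$ inside the later block starts no earlier than the first position of that block. Consequently the gap letter between the two blocks sits strictly between the two occurrences of $u$, so $w$ has a factor of the form $u\,v\,u$ with $|v|\ge 1$, and the embedding of $Z_{n+1}$ into $w$ follows as in the first paragraph.

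I do not foresee a real obstacle: the whole argument is a pigeonholing once the right partition is chosen, and the form of the bound essentially forces that partition upon us (the factor $m(n,k)+1$ counts blocks where a repetition must occur, while the additive $+1$ inside the parentheses provides the mandatory separating letter that ensures $v\neq\lambda$).
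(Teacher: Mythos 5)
Your proof is correct and follows essentially the same route as the paper: the same partition into $m(n,k)+1$ blocks of length $f(n,k)$ separated by single letters, the same pigeonhole argument on minimal words of Zimin type $n$, and the same conclusion that a repeated minimal word yields a factor $uvu$ of Zimin type at least $n+1$. Your explicit justification that the separating letters force $v\neq\lambda$ is a welcome elaboration of a point the paper leaves to its figure.
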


\begin{proof}
Consider a word $w$ of length $(f(n,k)+1)\cdot m(n,k) + f(n,k)$, partitioned as follows:\\
\centerline{\small
\unitlength=1.1mm
\begin{picture}(103,22)(0,2)
\put(5,5){\line(1,0){60}}
\put(80,5){\line(1,0){24}}
\multiput(5,5)(18,0){4}{\line(0,1){3}}
\multiput(21,5)(18,0){3}{\line(0,1){3}}
\multiput(86,5)(18,0){2}{\line(0,1){3}}
\put(88,5){\line(0,1){3}}
\multiput(13,7)(18,0){3}{\makebox(0,0)[cb]{$\overbrace{\phantom{aaaaamaa}}$}}
\multiput(13,11)(18,0){3}{\makebox(0,0)[cb]{$f(n,k)$}}
\put(96,7){\makebox(0,0)[cb]{$\overbrace{\phantom{aaaaamaa}}$}}
\put(54.5,14){\makebox(0,0)[cb]{$\overbrace{\phantom{aaaaaaaaaaaaaaaaaaaaaaaaaaaaaaaaaaaaaaaaaaaaaaaaaaaaaa}}$}}
\put(3,5.5){\makebox(0,0)[rb]{\large$w=$}}
\put(72.5,5){\makebox(0,0)[cc]{$\ldots$}}
\put(96,11){\makebox(0,0)[cb]{$f(n,k)$}}
\put(54.5,18){\makebox(0,0)[cb]{$m(n,k)+1$ blocks}}
\end{picture}
}
Each block of length $f(n,k)$ contains a minimal word of type $n$. By the pigeonhole principle, some minimal word $z$ occurs twice. Then the factor of $w$ containing $z$ as a short border has Zimin type $\ge n{+}1$, whence the result. 
\end{proof}

\begin{lemma} \label{min2num}
The number of $k$-ary minimal words of Zimin type 2 is
\begin{equation} \label{m2k}
m(2,k)=k!\cdot\sum_{i=0}^{k-1} \frac{2^{k-1-i}}{i!}\; .
\end{equation}
\end{lemma}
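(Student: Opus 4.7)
The plan is to reduce minimality to a purely local condition on letter repetitions, then enumerate the resulting structured words and simplify.

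First, I would sharpen Lemma~\ref{embedZ2} to a statement about the \emph{whole} word: because the morphism witnessing Zimin type $2$ may send $x_1$ to a single letter, $w$ itself is an image of $Z_2$ iff $|w|\ge 3$ and $w[1]=w[|w|]$. Applying this criterion to every factor yields the working characterization: $w$ is minimal of Zimin type $2$ iff $|w|\ge 3$, $w[1]=w[|w|]=:a$, and no pair $(i,j)\ne(1,|w|)$ with $j-i\ge 2$ satisfies $w[i]=w[j]$.

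Next I would analyze the interior $v:=w[2..|w|-1]$ (which is non-empty). If $|v|=1$ there is no proper factor of length $\ge 3$, so the middle letter is free, contributing $k^2$ words. If $|v|\ge 2$, the forbidden distance-$\ge 2$ repetitions force (i)~$a$ not to occur in $v$ at all (compare position $2$ with $|w|$, and position $1$ with $|w|-1$), and (ii)~any two equal letters inside $v$ to be adjacent. From (ii), $v$ is a concatenation of maximal runs of length $1$ or $2$, and no letter may appear in two different runs; the run letters are drawn from the $k-1$ letters $\neq a$.

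Counting the $|v|\ge 2$ contribution: fix $a$ ($k$ choices), choose the number $t$ of runs ($1\le t\le k-1$), an ordered sequence of $t$ distinct non-$a$ letters ($(k-1)!/(k-1-t)!$ ways), and assign each run length $1$ or $2$ ($2^t$ patterns), discarding the single length-$1$ pattern at $t=1$ (which yields $|v|=1$). Combining with the $|v|=1$ case,
\[ m(2,k) = k^2 + k\Bigl[(k-1) + \sum_{t=2}^{k-1} 2^t\,\frac{(k-1)!}{(k-1-t)!}\Bigr]. \]
Using $k^2+k(k-1)=k+2k(k-1)$ to restore $t=1$ as a legitimate summand collapses this to $k+k!\sum_{t=1}^{k-1} 2^t/(k-1-t)!$; the substitution $i:=k-1-t$, with the leading $k$ playing the role of the missing $i=k-1$ term, converts the expression into the claimed $k!\sum_{i=0}^{k-1} 2^{k-1-i}/i!$.

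The only genuine obstacle is bookkeeping the boundary case $|w|=3$ together with the degenerate $t=1$ length-$1$ run so that nothing is double-counted; the combinatorial content is carried entirely by the local block structure of the interior derived from the minimality criterion.
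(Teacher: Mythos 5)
Your proof is correct and takes essentially the same route as the paper: both arguments reduce minimality to the statement that the only pair of equal letters at distance at least $2$ is the pair formed by the first and last positions, deduce the block structure $a\,b_1^{j_1}\cdots b_r^{j_r}\,a$ with distinct $b_i\ne a$ and $j_i\in\{1,2\}$ (together with $aaa$), and sum over the number of blocks; your algebra correctly recovers (\ref{m2k}), carrying out in full the computation the paper dismisses as ``a mere combinatorial exercise.'' One caveat: your intermediate biconditional ``$w$ is an image of $Z_2$ iff $|w|\ge 3$ and $w[1]=w[|w|]$'' fails in the ``only if'' direction (e.g.\ $abcab=h(x_1x_2x_1)$ with $h(x_1)=ab$, $h(x_2)=c$, yet $a\ne b$); the statement you actually need is that a \emph{minimal} word of type $2$ has this form, which holds because an image $uvu$ with $|u|\ge 2$ has the proper prefix $w[1..|uv|{+}1]$ beginning and ending with $u[1]$, itself an image of $Z_2$ — so the characterization and the count are unaffected.
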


\begin{proof}
By Lemma~\ref{embedZ2} and the definition of minimality, $k$-ary minimal word $w$ of Zimin type 2 has a unique pair of equal letters in non-consecutive positions: the first and the last letter. Thus, either $w=aaa$ for a letter $a$, or $w=ab_1^{j_1}\cdots b_r^{j_r}a$, where $r<k$, all letters $b_i$ are distinct from each other and from $a$, and $j_i\in\{1,2\}$ for any $i$. Counting such words is a mere combinatorial exercise. For example, if $r=k-1$, there are $k!$ ways to choose the letters $a,b_1,\ldots,b_r$ and $2^{k-1}$ ways to choose the numbers $j_1,\ldots,j_r$; this gives us the first summand in (\ref{m2k}). The other cases are similar, so we omit the rest of the computation.
\end{proof}

Lemmas~\ref{next} and~\ref{min2num} give the bounds for the values $f(3,k)$. The general bound $f(3,r)\le \sqrt{e}\cdot 2^r(r{+}1)!+2r+1$ is obtained by computing an infinite sum instead of the finite one in (\ref{m2k}). As for the binary alphabet, $m(2,2)=6$, implying $f(3,2)\le 41$ by (\ref{fbound}). This bound means that a direct computer search to compute the exact values of $f(3,2)$ and $m(3,2)$ is feasible. Implementing this search, we learned that $f(3,2)=29$ and $m(3,2)=7882$. Then (\ref{fbound}) gives us an upper bound for $f(4,2)$. The theorem is proved.
\end{proof}

\bibliographystyle{plain}
\bibliography{my_bib}

\newcommand{\noopsort}[1]{} \newcommand{\singleletter}[1]{#1}
\begin{thebibliography}{1}

\bibitem{Ang80}
Dana Angluin.
\newblock Finding patterns common to a set of strings.
\newblock {\em J. Comput. Syst. Sci.}, 21(1):46--62, 1980.

\bibitem{BEM79}
D.~A. Bean, A.~Ehrenfeucht, and G.~McNulty.
\newblock Avoidable patterns in strings of symbols.
\newblock {\em Pacific J. Math.}, 85:261--294, 1979.

\bibitem{Cur05}
J.~D. Currie.
\newblock Pattern avoidance: themes and variations.
\newblock {\em Theoret. Comput. Sci.}, 339:7--18, 2005.

\bibitem{LuVa99}
A.~de~Luca and S.~Varricchio.
\newblock {\em Finiteness and regularity in semigroups and formal languages}.
\newblock Monographs in Theoretical Computer Science. Springer, 1999.

\bibitem{MiPi92}
F.~Mignosi and G.~Pirillo.
\newblock Repetitions in the {Fibonacci} infinite word.
\newblock {\em RAIRO Inform. Th\'eor. App.}, 26:199--204, 1992.

\bibitem{MoPr70}
J.~H. Morris and V.~R. Pratt.
\newblock A linear pattern-matching algorithm.
\newblock Tech. Rep. 40, University of California, Berkeley, 1970.

\bibitem{Sap95}
M.~V. Sapir.
\newblock Combinatorics on words with applications.
\newblock LITP report, 32, 1995.

\bibitem{Zim82}
A.~I. Zimin.
\newblock Blocking sets of terms.
\newblock {\em Mat. Sbornik}, 119:363--375, 447, 1982.
\newblock In Russian. English translation in {\it Math. USSR Sbornik}, {\bf 47}
  (1984), 353--364.

\end{thebibliography}

\end{document}